\documentclass[a4paper,11pt]{amsart}
\usepackage{amssymb}
\textheight 22cm \textwidth 15cm \voffset=-0.5cm
\oddsidemargin=0.5cm \evensidemargin=0.5cm
\topmargin=-0.5cm

\DeclareMathSymbol{\subsetneqq}{\mathbin}{AMSb}{36}
\newcommand{\R}{\mathbb{R}}

\newcommand{\C}{\mathbb{C}}

\newcommand{\beq}{\begin{eqnarray}}
\newcommand{\eeq}{\end{eqnarray}}
\newcommand{\bq}{\begin{equation}}
\newcommand{\eq}{\end{equation}}
\newcommand{\beqn}{\begin{eqnarray*}}
\newcommand{\eeqn}{\end{eqnarray*}}
\newcommand{\bex}{\begin{exo}}
\newcommand{\eex}{\end{exo}}
\newcommand{\ben}{\begin{enumerate}}
\newcommand{\een}{\end{enumerate}}
\newtheorem{th1}{{\bf Theorem}}[section]
\newtheorem{thm}[th1]{{\bf Theorem}}
\newtheorem{lem}[th1]{{\bf Lemma}}
\newtheorem{prop}[th1]{{\bf Proposition}}

\newtheorem{rem}[th1]{\bf Remark}

\newtheorem{defi}[th1]{\bf Definition}

\author[T. Saanouni]{Tarek Saanouni}
\address{University of Tunis El Manar, Faculty of Science of Tunis, LR03ES04 partial differential Equations and applications, 2092 Tunis, Tunisia.}
\email{\sl Tarek.saanouni@ipeiem.rnu.tn}
\thanks{{\sf  T. Saanouni} is grateful to the Laboratory of
PDE and Applications at the Faculty of Sciences of Tunis.}

\subjclass{35Q55}
\keywords{Nonlinear fractional Schr\"odinger equation, inhomogeneous, global Existence, sharp constants,...}

\title[Remarks on the inhomogeneous fractional NLS]{Remarks on the inhomogeneous fractional nonlinear Schr\"odinger equation}

\date{\today}
\begin{document}
\begin{abstract}
Using a sharp Gagliardo-Nirenberg type inequality, well-posedness issues of the initial value problem for a fractional inhomogeneous Schr\"odinger equation are investigated.
\end{abstract}


\maketitle

\tableofcontents
\vspace{ 1\baselineskip}
\renewcommand{\theequation}{\thesection.\arabic{equation}}
\section{Introduction}
Consider the initial value problem for an inhomogeneous nonlinear Schr\"odinger equation
\begin{equation}\label{eq1}
\left\{
\begin{matrix}
i\dot u-(-\Delta)^\alpha u+\epsilon|x|^\gamma u|u|^{p-1}=0;\\
u_{|t=0}= u_0,
\end{matrix}
\right.
\end{equation}
which models various physical contexts in the description of nonlinear
waves such as propagation of a laser beam and plasma waves. For example, when
$\gamma = 0$, it arises in nonlinear optics, plasma physics and fluid mechanics \cite{adkm,baz}. When $\gamma> 0$, it can be thought
of as modeling inhomogeneities in the medium. The nonlinearity enters due to the
effect of changes in the field intensity on the wave propagation characteristics of
the medium and the nonlinear weight can be looked as the proportional to the
electron density \cite{g,lt,tw}.\\

Here and hereafter $\epsilon\in\{\pm1\}$, $N\geq2$, $\alpha\in(0,1)$, the inhomogeneous exponent $\gamma\in\R$ and $u(t,x) : \R_+\times\R^N\to\C$. 
The following quantities are called respectively mass and energy are conserved under the flow of the problem \eqref{eq1}.
\begin{gather*}
M(t)=M(u(t)):=\int_{\R^N}|u(t)|^2\,dx;\\
E(t)=E(u(t)):=\int_{\R^N}\Big(\frac12|(-\Delta)^\frac\alpha2u(t)|^2- \frac\epsilon{1+p}|x|^\gamma|u(t)|^{1+p}\Big)dx.
\end{gather*}
If $\epsilon=-1$, the energy is positive and \eqref{eq1} is said to be defocusing, otherwise it is focusing and a control of the $\|\,.\,\|_{H^\alpha}$ norm of a local solution is no longer possible with the conserved laws.\\

\noindent In the classical case $\alpha=1$ and $\gamma=0$, for $1<p\leq\frac{N+2}{N-2}$ if $N\geq 3$ and $1<p<\infty$ if $N\in\{1,2\}$, local well-posedness in the energy space holds \cite{J.B1,J.B2}. In the defocusing case, the solution to the Cauchy problem \eqref{eq1} is global and scatters if $p> 1+\frac4N$. In the focusing sign, the solution is global if $p<1+\frac4N$ or $p=1+\frac4N$ with small data \cite{sl}. Moreover, when $1<p<\frac{N+2}{N-2}$ and $\varepsilon=1$, there exists ground state which is stable if $p<1+\frac4N$ and unstable if $p\geq1+\frac4N$ \cite{sl}.\\

When $\gamma=0$ and $\alpha\in(0,1)$, the problem \eqref{eq1} is a nonlocal model known as nonlinear fractional Schr\"odinger equation which has also attracted much attentions recently \cite{chho,chho2,chkl,chkl2,ghx,gh,gh2,gh3,gw,gs,fl,fls}. The fractional Schr\"odinger equation is a fundamental equation of fractional quantum mechanics, which was derived by Laskin \cite{l,l2} as a result of extending the Feynman path integral, from the Brownian-like to Levy-like quantum mechanical paths. It is proved that the Cauchy problem is well-posed and scatters in the radial energy space \cite{gh,gh2}.\\

In the classical Laplacian case $\alpha=1$, if $\gamma>0$ the problem \eqref{eq1} was treated in \cite{jc}, where well-posedness was discussed via potential-well method, a recent work \cite{hms} considers a similar problem with mixed power nonlinearity. If $\gamma<0$, the stability issues for ground states were investigated by many authors \cite{m,fw,lww,fo}. \\

This note seems be the first one dealing with well-posedness issues for the inhomogeneous fractional Schr\"odinger equation.\\

The purpose of this manuscript is twice. First, some classical Gagliardo-Nirenberg type inequality is generalized to the inhomogeneous fractional case. Second, some analogue results to the classical Schr\"odinger case $\alpha=1$ about global well-posedness of \eqref{eq1} are obtained in the radial case.\\

\noindent The rest of the paper is organized as follows. The second section is devoted to give the main results and some tools needed in the sequel. Sections three, four and five deal with some Gagliardo-Nirenberg inequality and it's best constant. The sixth section contains a proof of global well-posedness for the Schr\"odinger problem \eqref{eq1}. Section seven deals with existence and stability of ground states. The last section is about global well-posedness of the Schr\"odinger problem \eqref{eq1} in the focusing case using potential well method \cite{ps}. In the appendix, some compact Sobolev injection is proved.\\

Here and hereafter, $C$ denotes a constant which may vary from line to line, $A\lesssim B$ denotes an estimate of the form $A\leq C B$
for some absolute constant $C$, $\int .\,dx:=\int_{\R^N}.\,dx$, $L^p:=L^p({\R^N})$ is the Lebesgue space endowed with the norm $\|\,.\,\|_p:=\|\,.\,\|_{L^p}$ and $\|\,.\,\|:=\|\,.\,\|_2$. The classical fractional Sobolev space is $H^{\alpha,p}:=(I-\Delta)^{-\frac\alpha2}L^p$ and $H^\alpha:=H^{\alpha,2}$ is the energy space endowed with the complete norm 
$$\|u\|_{H^\alpha}:=(\|u\|^2+\|(-\Delta)^\frac\alpha2 u\|^2)^\frac12.$$
 If $T>0$ and $X$ is an abstract functional space, $C_T(X):=C([0,T],X)$, $L_T^p(X):=L^p([0,T],X)$ and $X_{rd}$ is the set of radial elements in $X$. Finally, for an eventual solution of \eqref{eq1}, $T^*>0$ denotes it's lifespan. 
 \section{Main results and background}
In this section we give the main results and some technical tools needed in the sequel. Here and hereafter, we denote, for $\phi\in H^\alpha_{rd},$ the so-called action 
$$S(\phi):=E(\phi)+\frac12 M(\phi)= \frac{1}{2} \| \phi \| ^{2}_{H^{\alpha}} -\frac{1}{p+1}\int |x|^\gamma|\phi|^{p+1}\,dx.$$
If $\lambda\neq0$ and $(a,b)\in\R^2$, we define the scaling $\phi_{a,b}^\lambda:=\lambda^a\phi(\frac.{\lambda^b})$ and the operators
\begin{gather*}
 \mathcal{L}_{a,b}S(\phi):=\partial_{\lambda}(S(\phi^{\lambda}_{a,b}))_{|
\lambda=1};\\
 K_{a,b}:=\mathcal{L}_{a,b} S\quad\mbox{and}\quad H_{a,b}:=S-\frac1{2a+Nb}K_{a,b}.
\end{gather*}
We denote the functional defined on $H^\alpha_{rd}$ by
$$J(u):=\frac{\|(-\Delta)^\frac\alpha2u\|^B\|u\|^A}{\int|x|^\gamma|u|^{1+p}\,dx},$$
where
\begin{gather*}
A:=(1-\mu)(1+p-\frac{2\gamma}{N-2\alpha})=1+p-\frac1{2\alpha}(N(p-1)-2\gamma);\\ B:=\mu(1+p-\frac{2\gamma}{N-2\alpha})+\frac{2\gamma}{N-2\alpha}=\frac1{2\alpha}(N(p-1)-2\gamma);\\
\mu:=\frac N\alpha(\frac12-\frac1{p+1-\frac{2\gamma}{N-2\alpha}}).
\end{gather*}
\subsection{Main results}
Results proved in this paper are listed in what follows. First, we derive an inhomogeneous Gagliardo-Nirenberg type inequality.
\begin{thm}\label{t0}
Let $\gamma\in\R$, $\alpha\in(0,1)$ and $2\leq p+1-\frac{2\gamma}{N-2\alpha}\leq\frac{2N}{N-2\alpha}$. Then, there exists a positive constant $C(N,p,\gamma,\alpha)$, such that for any $u\in H^\alpha_{rd}$,
\begin{equation}\label{ineq}
\int|x|^\gamma|u|^{1+p}\,dx\leq C(N,p,\gamma,\alpha)\|u\|^A\|u\|_{\dot H^\alpha}^B.\end{equation}
\end{thm}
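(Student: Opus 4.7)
The exponents $A$ and $B$ are pinned down by scale-covariance: under the rescaling $u\mapsto u(\cdot/\lambda)$ the left-hand side scales as $\lambda^{N+\gamma}$, and a direct check gives $A+B=p+1$ together with $\tfrac{N}{2}A+\tfrac{N-2\alpha}{2}B=N+\gamma$, so these are the only admissible exponents. My plan is to realise this by peeling off a radial pointwise decay factor, invoking a fractional Hardy inequality, and closing by the standard interpolation between $L^2$ and $\dot H^\alpha$.

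Concretely, on the radial class I would first use the Strauss-type decay estimate
\begin{equation*}
|u(x)|\lesssim |x|^{-\tfrac{N-2\alpha}{2}}\,\|u\|_{\dot H^\alpha},\qquad u\in \dot H^\alpha_{rd},
\end{equation*}
raised to the power $p-1$, to absorb $|u|^{p-1}$ in the integrand. This reduces the target to the weighted $L^2$-bound
\begin{equation*}
\int |x|^\gamma |u|^{p+1}\,dx \lesssim \|u\|_{\dot H^\alpha}^{\,p-1}\int |x|^{-2s}|u|^2\,dx,\qquad s:=\tfrac{1}{4}\bigl((N-2\alpha)(p-1)-2\gamma\bigr).
\end{equation*}
The lower hypothesis $p+1-\tfrac{2\gamma}{N-2\alpha}\geq 2$ is exactly $s\geq 0$, and the upper hypothesis $\leq\tfrac{2N}{N-2\alpha}$ is exactly $s\leq\alpha$, so the fractional Hardy inequality $\int|x|^{-2s}|u|^2\,dx\lesssim\|u\|_{\dot H^s}^2$ is applicable, and a standard interpolation $\|u\|_{\dot H^s}\lesssim \|u\|^{1-s/\alpha}\|u\|_{\dot H^\alpha}^{s/\alpha}$ then yields $\int|x|^\gamma|u|^{p+1}\,dx\lesssim \|u\|^{2(1-s/\alpha)}\|u\|_{\dot H^\alpha}^{(p-1)+2s/\alpha}$. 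Substituting the definition of $s$, a short computation verifies $2(1-s/\alpha)=A$ and $(p-1)+2s/\alpha=B$, closing the argument.

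The genuine obstacle is that the pointwise Strauss-type estimate above is only classical for $\alpha>1/2$; for smaller $\alpha$ the radial $H^\alpha$ functions need not be continuous, so the first step cannot be used as stated. For $\alpha\in(0,\tfrac{1}{2}]$ I would replace it by an integrated weighted embedding: the endpoint radial Hardy-Sobolev bound $\int|x|^\gamma|u|^{r_\ast}\,dx\lesssim\|u\|_{\dot H^\alpha}^{r_\ast}$ at $r_\ast=\tfrac{2(N+\gamma)}{N-2\alpha}$, and then a Hölder interpolation against an $L^2$ factor with conjugate indices chosen precisely so as to reproduce the same $A,B$. Proving that radial endpoint embedding for positive $\gamma$ is the true technical content; I would expect it to follow from the compact radial Sobolev injection announced for the appendix, together with a Stein--Weiss type fractional integration inequality applied to the radial extension.
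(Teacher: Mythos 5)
Your core argument is correct, and it is a genuine variant of the paper's proof rather than a reproduction of it. Both arguments hinge on the same radial ingredient, the Strauss-type decay $\sup_{x\neq0}|x|^{\frac N2-\alpha}|u(x)|\lesssim\|u\|_{\dot H^\alpha}$ (Lemma \ref{strs}), but they close differently: the paper peels off only the weight, writing $|x|^\gamma=(|x|^{\frac N2-\alpha})^{\frac{2\gamma}{N-2\alpha}}$, bounds that factor by $\|u\|_{\dot H^\alpha}^{\frac{2\gamma}{N-2\alpha}}$, and finishes with the unweighted fractional Gagliardo--Nirenberg inequality (Lemma \ref{gn}) applied to $\|u\|_{1+p-\frac{2\gamma}{N-2\alpha}}$; you instead peel off $|u|^{p-1}$ via the Strauss bound raised to the power $p-1$, reduce to the weighted $L^2$ quantity $\int|x|^{-2s}|u|^2\,dx$ with $s=\frac14\bigl((N-2\alpha)(p-1)-2\gamma\bigr)$, and finish with fractional Hardy plus the interpolation $\|u\|_{\dot H^s}\lesssim\|u\|^{1-s/\alpha}\|u\|_{\dot H^\alpha}^{s/\alpha}$. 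Your exponent bookkeeping ($s\in[0,\alpha]$ is exactly the stated hypothesis, and $2(1-s/\alpha)=A$, $(p-1)+2s/\alpha=B$) checks out, and your route has a small advantage: it applies verbatim for $\gamma<0$, whereas the paper's first display raises the Strauss factor to the negative power $\frac{2\gamma}{N-2\alpha}$ and so, as written, only gives an upper bound when $\gamma\geq0$.

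Two caveats. First, both proofs are really confined to $\alpha>\frac12$, since Lemma \ref{strs} requires it; the paper applies it silently for all $\alpha\in(0,1)$, so you are not behind the paper here, but your fallback for $\alpha\in(0,\frac12]$ is only a sketch. Second, that sketch as stated would not close: plain H\"older between $L^2(dx)$ and the endpoint space $L^{r_*}(|x|^\gamma dx)$ does not redistribute the weight correctly (it only matches the target when $r_*=p+1$), and the compact injection of the appendix is both proved via the same Strauss inequality and, being a compactness statement, cannot yield the scale-invariant bound \eqref{ineq}. A workable low-$\alpha$ substitute would be the scaling-critical radial weighted embedding $\||x|^{\gamma/(p+1)}u\|_{p+1}\lesssim\|u\|_{\dot H^{\sigma}}$ for the intermediate regularity $\sigma$ dictated by scaling, followed by the same $\dot H^\sigma$ interpolation you already use; that embedding is not in the paper and would have to be imported or proved.
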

Next, we are concerned with the best constant $C(N,p,\gamma,\alpha)$ in the previous inequality. 
\begin{thm}\label{t0'}
Let $\gamma\in\R$, $\alpha\in(0,1)$ and $2<p-\frac{2\gamma}{N-2\alpha}<\frac{2N}{N-2\alpha}$. Then, 
$$\beta:=\inf\Big\{J(u),\quad u\in H^\alpha_{rd}\Big\}$$
is attained in some $\psi\in H^\alpha_{rd}$ satisfying
\begin{equation}\label{euler}
B(-\Delta)^\alpha\psi+A\psi-\beta(p+1)|x|^\gamma\psi|\psi|^{p-1}=0
\end{equation}
and $\beta=(\int|x|^\gamma|\psi|^{p+1}\,dx)^{-1}$.
\end{thm}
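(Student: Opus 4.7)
The plan is to realise $\psi$ as a minimiser of $J$ by the direct method of the calculus of variations, and then read \eqref{euler} off the Euler--Lagrange condition. The key preliminary observation is that $J$ is invariant under the two-parameter dilation $u\mapsto \mathcal{A}\,u(\mathcal{B}\,\cdot)$ on $H^\alpha_{rd}$: a direct change of variables together with Plancherel yield
$$\|\mathcal{A}\,u(\mathcal{B}\cdot)\|^2=\mathcal{A}^2\mathcal{B}^{-N}\|u\|^2,\qquad \|\mathcal{A}\,u(\mathcal{B}\cdot)\|_{\dot H^\alpha}^2=\mathcal{A}^2\mathcal{B}^{2\alpha-N}\|u\|_{\dot H^\alpha}^2,$$
$$\int|x|^\gamma|\mathcal{A}\,u(\mathcal{B}x)|^{p+1}\,dx=\mathcal{A}^{p+1}\mathcal{B}^{-N-\gamma}\int|x|^\gamma|u|^{p+1}\,dx,$$
and the identities $A+B=p+1$ together with $\alpha B=\tfrac12(N(p-1)-2\gamma)$ force the powers of $\mathcal{A}$ and $\mathcal{B}$ to vanish in $J(\mathcal{A}\,u(\mathcal{B}\cdot))$. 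Given any minimising sequence $(u_n)\subset H^\alpha_{rd}$ with $J(u_n)\to\beta$, I would select $(\mathcal{A}_n,\mathcal{B}_n)$ to enforce the two normalisations $\|u_n\|=\|u_n\|_{\dot H^\alpha}=1$.

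After this rescaling $(u_n)$ is bounded in $H^\alpha_{rd}$, so up to a subsequence $u_n\rightharpoonup\psi$ weakly in $H^\alpha_{rd}$. I would then invoke the compact weighted Sobolev embedding $H^\alpha_{rd}\hookrightarrow\hookrightarrow L^{p+1}(|x|^\gamma\,dx)$ proved in the appendix---available exactly under the open range $2<p-\tfrac{2\gamma}{N-2\alpha}<\tfrac{2N}{N-2\alpha}$---to pass to the limit in the denominator: $\int|x|^\gamma|u_n|^{p+1}\,dx\to\int|x|^\gamma|\psi|^{p+1}\,dx$. Since the left-hand side equals $J(u_n)^{-1}\to\beta^{-1}>0$, the limit is strictly positive and hence $\psi\not\equiv 0$. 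Combining this with weak lower semi-continuity of $\|\cdot\|$ and $\|\cdot\|_{\dot H^\alpha}$ gives
$$\beta\leq J(\psi)=\frac{\|\psi\|_{\dot H^\alpha}^B\|\psi\|^A}{\int|x|^\gamma|\psi|^{p+1}\,dx}\leq\beta,$$
which forces $\|\psi\|=\|\psi\|_{\dot H^\alpha}=1$ and, accordingly, $\beta=(\int|x|^\gamma|\psi|^{p+1}\,dx)^{-1}$.

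It remains to derive \eqref{euler}. Writing $\tfrac{d}{dt}\bigl|_{t=0}J(\psi+t\varphi)=0$ for an arbitrary $\varphi\in H^\alpha_{rd}$, substituting the normalisations just obtained and dividing by the common factor $\beta$, I would obtain
$$B\,\mathrm{Re}\langle(-\Delta)^\alpha\psi,\varphi\rangle+A\,\mathrm{Re}\langle\psi,\varphi\rangle-\beta(p+1)\,\mathrm{Re}\int|x|^\gamma|\psi|^{p-1}\overline{\psi}\,\varphi\,dx=0,$$
i.e.\ \eqref{euler} in the distributional sense. The only delicate point is the compactness step above: the weight $|x|^\gamma$ may concentrate mass at the origin when $\gamma<0$ or carry it off to infinity when $\gamma>0$, and the \emph{strict} inequalities in the hypothesis---in contrast with the closed range allowed in Theorem~\ref{t0}---are precisely what keeps the exponent $p+1$ strictly sub-critical and affords the required compact embedding; once that embedding is granted, the rest of the argument is routine.
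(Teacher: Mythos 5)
Your proposal is correct and follows essentially the same route as the paper: scaling invariance of $J$ to normalise a minimising sequence so that $\|u_n\|=\|u_n\|_{\dot H^\alpha}=1$, weak convergence plus the compact embedding $H^\alpha_{rd}\hookrightarrow\hookrightarrow\Sigma$ from the appendix to pass to the limit in the weighted integral, weak lower semicontinuity to force both norms of $\psi$ to equal $1$, and then the Euler--Lagrange equation for $J$ to obtain \eqref{euler}. Your explicit remark that the limit of the denominators equals $\beta^{-1}>0$, so that $\psi\neq 0$, is a slightly more careful rendering of a point the paper leaves implicit, but the argument is the same.
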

\begin{thm}\label{t0''}
Let $\gamma\in\R$, $\alpha\in(0,1)$ and $2<p-\frac{2\gamma}{N-2\alpha}<\frac{2N}{N-2\alpha}$. Then, 
$$C(N,p,\gamma,\alpha)=\frac{1+p}{{A}}(\frac AB)^{\frac{B}2}\|\phi\|^{-(p-1)},$$
where $\phi$ is a ground state solution to 
$$(-\Delta)^\alpha \phi +\phi -|x|^\gamma\phi|\phi|^{p-1}=0 ,\quad 0\neq \phi \in H^\alpha_{rd}.$$
\end{thm}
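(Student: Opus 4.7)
The plan is to identify the best constant by first noting that $C(N,p,\gamma,\alpha)=1/\beta$ (by definition $\beta=\inf J$), then reducing the Euler-Lagrange equation \eqref{euler} satisfied by the minimizer $\psi$ of Theorem \ref{t0'} to the normalized equation satisfied by $\phi$ via a two-parameter scaling, and finally evaluating $J(\phi)$ in closed form using the equation together with a Pohozaev-type identity.

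\textbf{Step 1: Scaling to a normalized ground state.} Starting from the minimizer $\psi$ of Theorem~\ref{t0'}, I look for constants $\lambda,\mu>0$ such that $\phi(x):=\lambda\psi(\mu x)$ solves the target equation. Plugging into $(-\Delta)^\alpha\phi+\phi-|x|^\gamma\phi|\phi|^{p-1}=0$ and comparing with \eqref{euler} evaluated at $\mu x$, one finds that it suffices to take
$$\mu^{2\alpha}=\frac{B}{A},\qquad \lambda^{p-1}=\frac{\beta(p+1)}{A}\,\mu^{\gamma}.$$
Both choices are admissible since $A,B,\beta>0$, and a change of variable shows $\phi\in H^\alpha_{rd}$ satisfies the normalized equation.

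\textbf{Step 2: Scale-invariance of $J$.} An elementary homogeneity check using the identities $A+B=p+1$ and $\alpha B=\tfrac{N(p-1)}{2}-\gamma$ shows that $J(\lambda v(\mu\cdot))=J(v)$ for every $\lambda,\mu>0$. In particular $J(\phi)=J(\psi)=\beta$.

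\textbf{Step 3: Nehari and Pohozaev identities.} Testing the equation for $\phi$ against $\phi$ itself gives the Nehari-type identity
$$\|(-\Delta)^{\alpha/2}\phi\|^2+\|\phi\|^2=\int|x|^\gamma|\phi|^{p+1}dx.$$
Testing against $x\cdot\nabla\phi$ and invoking the fractional Pohozaev identity (Ros-Oton--Serra) $\int(-\Delta)^\alpha\phi\,(x\cdot\nabla\phi)dx=-\tfrac{N-2\alpha}{2}\|(-\Delta)^{\alpha/2}\phi\|^2$, together with $x\cdot\nabla|x|^\gamma=\gamma|x|^\gamma$, yields
$$\frac{N-2\alpha}{2}\|(-\Delta)^{\alpha/2}\phi\|^2+\frac{N}{2}\|\phi\|^2=\frac{N+\gamma}{p+1}\int|x|^\gamma|\phi|^{p+1}dx.$$
Solving this $2\times 2$ linear system gives exactly
$$\|(-\Delta)^{\alpha/2}\phi\|^2=\frac{B}{p+1}\int|x|^\gamma|\phi|^{p+1}dx,\qquad \|\phi\|^2=\frac{A}{p+1}\int|x|^\gamma|\phi|^{p+1}dx,$$
since the coefficients match $B=\tfrac{1}{2\alpha}(N(p-1)-2\gamma)$ and $A=p+1-B$.

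\textbf{Step 4: Computing the constant.} Inserting the two identities into $J(\phi)=\beta$ and using $A+B=p+1$ gives
$$\beta=\Big(\tfrac{B}{p+1}\Big)^{B/2}\Big(\tfrac{A}{p+1}\Big)^{A/2}\Big(\int|x|^\gamma|\phi|^{p+1}\Big)^{(p-1)/2}=\frac{A}{p+1}\Big(\frac{B}{A}\Big)^{B/2}\|\phi\|^{p-1},$$
where in the last step I replaced the weighted integral by $\tfrac{p+1}{A}\|\phi\|^2$. Taking reciprocals yields the claimed formula for $C(N,p,\gamma,\alpha)$. The main technical obstacle is the rigorous justification of the Pohozaev identity in the weighted fractional setting; this can be handled either via the Ros-Oton--Serra identity combined with a direct computation on the weight, or equivalently via the scaling framework already set up in Section~\ref{} through $K_{0,1}S(\phi)=\partial_\lambda S(\phi(\cdot/\lambda))|_{\lambda=1}$, which encodes the same identity without pointwise manipulations.
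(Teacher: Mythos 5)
Your proposal is correct, and its first half (rescale the minimizer $\psi$ of Theorem \ref{t0'} so that the Euler--Lagrange equation \eqref{euler} becomes the normalized equation $(-\Delta)^\alpha\phi+\phi-|x|^\gamma\phi|\phi|^{p-1}=0$, with $\mu^{2\alpha}=B/A$ and $\lambda^{p-1}=\frac{\beta(p+1)}{A}\mu^\gamma$) is exactly the paper's scaling step, written in the inverse direction. Where you diverge is in extracting the constant: the paper never evaluates $J(\phi)$ through integral identities; it simply uses the normalization $\|\psi\|=1$ inherited from the proof of Theorem \ref{t0'}, i.e. $1=a\,b^{-N/2}\|\phi\|$ with the explicit scaling parameters $a,b$, and reads off $\beta=\frac{A}{1+p}(\frac AB)^{-B/2}\|\phi\|^{p-1}$ by pure algebra. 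You instead combine scale-invariance of $J$ (so $J(\phi)=\beta$), the Nehari identity, and a fractional Pohozaev identity to solve for $\|(-\Delta)^{\alpha/2}\phi\|^2=\frac{B}{p+1}\int|x|^\gamma|\phi|^{p+1}dx$ and $\|\phi\|^2=\frac{A}{p+1}\int|x|^\gamma|\phi|^{p+1}dx$; your algebra checks out and reproduces the stated constant. The cost of your route is precisely the step you flag: the weighted fractional Pohozaev identity needs justification, either via Ros-Oton--Serra plus the computation on the weight $|x|^\gamma$, or via $K_{0,1}(\phi)=\langle S'(\phi),\mathcal L_{0,1}\phi\rangle=0$, which is the device the paper itself uses in the proof of Theorem \ref{t3}. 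Note, though, that you can bypass Pohozaev entirely within your own setup: since $\phi=\lambda\psi(\mu\cdot)$ and $\|\psi\|=\|(-\Delta)^{\alpha/2}\psi\|=1$, the choice $\mu^{2\alpha}=B/A$ already gives $\|(-\Delta)^{\alpha/2}\phi\|^2=\frac BA\|\phi\|^2$, and this together with the (elementary) Nehari identity yields your two relations in Step 3; that shortcut makes your argument as self-contained as the paper's while keeping your cleaner presentation of the final computation. The benefit of your version is that it makes transparent why the constant only depends on $\|\phi\|$ and the exponents $A,B$ for any normalized ground state, whereas the paper's bookkeeping ties the formula to the particular rescaled minimizer.
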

Using Theorem \ref{t0}, well-posedness of the Schr\"odinger problem \eqref{eq1} in the energy space holds.
\begin{prop}\label{t1}
Let $\gamma\in\R$, $\alpha\in(\frac N{2N-1},1)$, $2\leq p+1-\frac{2\gamma}{N-2\alpha}\leq\frac{2N}{N-2\alpha}$ and $u_0\in H^\alpha_{rd}$. Then,
there exists $T^*>0$ and a unique maximal solution to \eqref{eq1},
$$u\in C([0,T^*),H^\alpha_{rd}).$$
Moreover,
\begin{enumerate}
\item
$T^*=\infty$ in the defocusing case;
\item
the mass and the energy are conserved.
\end{enumerate} 
\end{prop}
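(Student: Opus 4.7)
My approach is to run a standard Banach fixed-point argument on the Duhamel formulation
$$u(t)=e^{-it(-\Delta)^\alpha}u_0+i\epsilon\int_0^t e^{-i(t-s)(-\Delta)^\alpha}\,|x|^\gamma u|u|^{p-1}(s)\,ds$$
in a ball of $C_T(H^\alpha_{rd})$ intersected with an appropriate Strichartz space, equipped with the $L^\infty_TL^2$ distance. The structural ingredients are (i) radial Strichartz estimates for the fractional propagator $e^{-it(-\Delta)^\alpha}$, which require the hypothesis $\alpha>N/(2N-1)$ to produce a useful admissible pair, and (ii) the inhomogeneous Gagliardo--Nirenberg inequality \eqref{ineq} of Theorem \ref{t0} to handle the weighted nonlinearity.

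First I would combine the (radial) Strichartz estimate applied to $\Phi(u)-e^{-it(-\Delta)^\alpha}u_0$ with a nonlinear estimate for $|x|^\gamma u|u|^{p-1}$ in a dual Strichartz norm. The nonlinearity is controlled on $H^\alpha_{rd}$ by iterating \eqref{ineq} together with a fractional product/chain rule to distribute $(-\Delta)^{\alpha/2}$ on the power nonlinearity, and by exploiting the subcritical restriction $p+1-\tfrac{2\gamma}{N-2\alpha}\le\tfrac{2N}{N-2\alpha}$ to keep every exponent admissible. Choosing $R\simeq\|u_0\|_{H^\alpha}$ and $T$ small enough, a short calculation shows $\Phi$ is a contraction, which produces a unique maximal solution in $C([0,T^*),H^\alpha_{rd})$.

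The conservation laws follow by multiplying \eqref{eq1} by $\bar u$ and by $\partial_t\bar u$, integrating in $x$, and taking imaginary/real parts after integration by parts; the needed regularity to justify the computation comes from a standard approximation by smooth solutions (mollifying $u_0$, solving, then passing to the limit using the contraction bounds). Once mass and energy are conserved, global existence in the defocusing case $\epsilon=-1$ is immediate: the energy is manifestly coercive,
$$\tfrac12\|(-\Delta)^{\alpha/2}u(t)\|^2\le E(u(t))=E(u_0),$$
so combined with $\|u(t)\|=\|u_0\|$ we get a uniform $H^\alpha$ bound on $[0,T^*)$, and the blow-up alternative forces $T^*=\infty$.

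The main technical obstacle is the singular/growing weight $|x|^\gamma$ in the nonlinearity: without it, the argument is classical. The radial framework is essential on two counts. It both unlocks the improved Strichartz estimates for the fractional propagator in the admissible range $\alpha>N/(2N-1)$, and it matches the hypothesis of Theorem \ref{t0}, which is the only tool that absorbs the weight into a purely $H^\alpha$-controlled quantity. A secondary delicate point is showing that $u\mapsto|x|^\gamma u|u|^{p-1}$ is locally Lipschitz on $H^\alpha_{rd}$ into the dual Strichartz space used in the iteration; this will require splitting $|x|^\gamma$ into $|x|\lesssim 1$ and $|x|\gtrsim 1$ pieces and interpolating between \eqref{ineq} and standard (radial) Sobolev embeddings, which I expect to be the most calculation-heavy step.
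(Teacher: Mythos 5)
Your plan coincides with the paper's own proof, which is itself only a one-sentence sketch invoking exactly the same ingredients: the Duhamel formula of Proposition \ref{fre}, the radial Strichartz estimates of Proposition \ref{prop2} (whence the hypothesis $\alpha>\frac N{2N-1}$), the inhomogeneous Gagliardo--Nirenberg inequality of Theorem \ref{t0} to control the weighted nonlinearity, and a Picard contraction argument as in \cite{Cas1,gw}, with conservation laws and energy coercivity giving $T^*=\infty$ in the defocusing case. Your write-up is in fact more detailed than the paper's, but the route is the same.
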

In the mass subcritical case, the local solution given by the previous result is global.
\begin{prop}\label{sscrtq}
Let $\gamma\in\R$, $\alpha\in(\frac N{2N-1},1)$, $2\leq p+1-\frac{2\gamma}{N-2\alpha}\leq\frac{2N}{N-2\alpha}$ and $u_0\in H^\alpha_{rd}$. Then, the solution given by the previous result is global if one of the next conditions holds.
\begin{enumerate}
\item
$B<2$;
\item
$B=2$ and $M(0)<\Big(\frac{p+1}{2C(N,p,\gamma,\alpha)}\Big)^\frac2 A$.
\end{enumerate}
\end{prop}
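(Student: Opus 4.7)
The plan is to reduce the question to an a priori estimate on $\|(-\Delta)^{\alpha/2}u(t)\|$: by the local theory of Proposition \ref{t1} together with the blow-up alternative and the conservation of mass, a solution extends globally once one shows that the $\dot H^\alpha$-norm cannot blow up in finite time. The defocusing case $\epsilon=-1$ is already covered by Proposition \ref{t1}\,(1), so I may assume $\epsilon=1$ throughout.

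Starting from conservation of energy, I rewrite
\[
\tfrac12 \|(-\Delta)^{\alpha/2}u(t)\|^{2}=E(u_{0})+\tfrac1{1+p}\int|x|^{\gamma}|u(t)|^{1+p}\,dx,
\]
and bound the right-hand side by applying Theorem \ref{t0} together with the conservation of $\|u(t)\|=\|u_{0}\|$. This yields
\[
\tfrac12 \|(-\Delta)^{\alpha/2}u(t)\|^{2}\le E(u_{0})+\tfrac{C(N,p,\gamma,\alpha)}{1+p}\,M(0)^{A/2}\,\|(-\Delta)^{\alpha/2}u(t)\|^{B}.
\]
Setting $X(t):=\|(-\Delta)^{\alpha/2}u(t)\|^{2}$, this is a single scalar inequality of the form $\tfrac12 X\le c_{0}+c_{1}X^{B/2}$, where $c_{0},c_{1}$ depend only on the initial data.

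In case (1), the exponent $B/2<1$, so by Young's inequality the term $c_{1}X^{B/2}$ can be absorbed into $\tfrac14 X$ plus a constant depending only on $c_{0},c_{1}$; this gives a time-independent bound on $X(t)$ on the maximal interval of existence. In case (2), $B/2=1$ and the estimate becomes
\[
\Bigl(\tfrac12-\tfrac{C(N,p,\gamma,\alpha)}{1+p}M(0)^{A/2}\Bigr)\,\|(-\Delta)^{\alpha/2}u(t)\|^{2}\le E(u_{0}).
\]
The hypothesis $M(0)<\bigl(\tfrac{p+1}{2C(N,p,\gamma,\alpha)}\bigr)^{2/A}$ is precisely what is needed to make the prefactor strictly positive, so $X(t)$ is again bounded uniformly in $t$.

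In both cases the uniform $H^{\alpha}$-bound, combined with the blow-up alternative from Proposition \ref{t1}, forces $T^{*}=\infty$. The only nontrivial point is the borderline $B=2$ case, where the smallness assumption on the mass is indispensable to get a coercive coefficient in front of $\|(-\Delta)^{\alpha/2}u(t)\|^{2}$; the subcritical case $B<2$ is essentially automatic once Theorem \ref{t0} is in hand.
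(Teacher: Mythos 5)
Your proposal is correct and is essentially the paper's argument: both rest on the conserved mass and energy combined with the inequality of Theorem \ref{t0} to control $\|u(t)\|_{\dot H^\alpha}$, the coefficient $1-\frac{2C(N,p,\gamma,\alpha)}{p+1}M(0)^{A/2}\|u\|_{\dot H^\alpha}^{B-2}$ being made harmless either because $B<2$ or by the smallness of the mass when $B=2$, and then the blow-up alternative yields $T^*=\infty$. The only difference is presentational: you run a direct absorption argument on $X(t)=\|(-\Delta)^{\alpha/2}u(t)\|^2$, whereas the paper argues by contradiction with $\limsup\|u(t)\|_{\dot H^\alpha}=\infty$.
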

Now, we are concerned with the focusing case $\epsilon=1$. Indeed, using the potential well method due to Payne-Sattinger \cite{ps}, we extend the local solution with data in some stable sets to a global one. Here, we are reduced to prove that the fractional elliptic problem 
$$(-\Delta)^\alpha\phi+\phi-|x|^\gamma\phi|\phi|^{p-1}=0,\quad 0\neq \phi \in H^{\alpha}_{rd}$$
has a ground state in the meaning that it has a nontrivial positive radial solution which minimizes the problem
\begin{equation}\label{m1}
 m_{a,b}:=\inf_{0\neq \phi \in H^{\alpha}_{rd}}\Big\{ S(\phi),   \;s.\,\, t. \quad  K_{a,b}(\phi) =0\Big\}.
\end{equation}
Let us recall the stability of ground state.
\begin{defi}
In the focusing case $(\epsilon=1)$, if $\phi$ is a ground state of \eqref{eq1}, then $e^{it}\phi\in C(\mathbb R,H^\alpha)$ is a global solution to \eqref{eq1} called standing wave. This standing wave is said to be stable if
$$\forall\varepsilon>0,\,\,\exists\delta>0\quad\mbox{ s. t. }\quad\Big(u_0\in U_\delta(\phi)\Rightarrow u(t)\in U_\varepsilon(\phi),\,\,\forall t\in(0,T^*)\Big),$$
where $u\in C_{T^*}(H^\alpha)$ is the solution of \eqref{eq1} and
$$U_\delta(\phi):=\Big\{v\in H^\alpha_{rd}\quad\mbox{ s. t. }\quad \inf_{\theta\in\mathbb R}\|v-e^{i\theta}\phi\|_{H^\alpha}<\delta\Big\},$$
otherwise, it is unstable.
\end{defi}
The next result guarantees the existence of ground state.
\begin{thm}\label{t3}
Take $\epsilon=1$, $\gamma\geq0$, $\alpha\in(0,1)$, a couple of real numbers $(a,b)\in \mathbb{R}_{+}^{*}\times
\mathbb{R}_{+}$ and $2 <p+1-\frac{2\gamma}{N-2\alpha}<\frac{2N}{N-2\alpha}.$ Then,
\begin{enumerate}
\item[(1)] $ m := m_{a,b} $ is nonzero and independent of $(a,b)$;
\item[(2)] there is a ground state solution to \eqref{eq1} in the following meaning
\begin{equation}\label{gst}
(-\Delta)^\alpha \phi +\phi -|x|^\gamma\phi|\phi|^{p-1}=0 ,\quad 0\neq \phi \in
 H_{rd}^{\alpha},\quad m=S(\phi);
\end{equation}
\item[(3)]this ground state is orbitally stable if $B<2$.
\end{enumerate}
\end{thm}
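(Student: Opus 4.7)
The plan is to follow the Payne--Sattinger potential-well method \cite{ps} combined with a Cazenave--Lions type stability argument, adapted to the radial fractional inhomogeneous setting. For part (1), I would first note
\begin{equation*}
S(\phi^\lambda_{a,b})=\frac12\lambda^{2a+Nb}\|\phi\|^2+\frac12\lambda^{2a+(N-2\alpha)b}\|(-\Delta)^{\alpha/2}\phi\|^2-\frac{\lambda^{a(p+1)+(N+\gamma)b}}{p+1}\int|x|^\gamma|\phi|^{p+1}\,dx,
\end{equation*}
and observe that the hypothesis $2<p+1-\frac{2\gamma}{N-2\alpha}<\frac{2N}{N-2\alpha}$ with $a>0$, $b\geq 0$, $\gamma\geq 0$ makes the nonlinear exponent strictly larger than the two linear ones. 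Hence $\lambda\mapsto S(\phi^\lambda_{a,b})$ is positive near $0^+$, tends to $-\infty$ at $\infty$, and its derivative $\lambda^{-1}K_{a,b}(\phi^\lambda_{a,b})$ has a unique positive zero, so every $\phi\neq 0$ admits a unique rescaling onto the Nehari manifold $\{K_{a,b}=0\}$. A direct computation yields
\begin{equation*}
H_{a,b}(\phi)=\frac{\alpha b}{2a+Nb}\|(-\Delta)^{\alpha/2}\phi\|^2+\frac{a(p-1)+\gamma b}{(p+1)(2a+Nb)}\int|x|^\gamma|\phi|^{p+1}\,dx\geq 0,
\end{equation*}
which coincides with $S$ on $\{K_{a,b}=0\}$. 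Combining $K_{a,b}=0$ with Theorem \ref{t0} and the algebraic identity $A+B=p+1>2$ gives a uniform positive lower bound on $\|\phi\|_{H^\alpha}$ along the Nehari manifold, whence $m_{a,b}>0$. Independence of $(a,b)$ then follows by rescaling any admissible $\phi$ for $(a,b)$ onto $\{K_{a',b'}=0\}$ via the unique positive root and comparing actions using the scaling formulas and concavity at the maximum.

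For part (2), I would take a minimizing sequence $\phi_n\in H^\alpha_{rd}$ with $K_{a,b}(\phi_n)=0$ and $S(\phi_n)\to m$. The identity $S=H_{a,b}$ on the constraint bounds $\|(-\Delta)^{\alpha/2}\phi_n\|$ and the weighted $L^{p+1}$ norm (with the $b=0$ case handled separately via $K_{a,0}=0$, which forces $\|\phi_n\|_{H^\alpha}^2=\int|x|^\gamma|\phi_n|^{p+1}\,dx$), and part (1) prevents collapse. Passing to a weak limit $\phi^*\in H^\alpha_{rd}$ and invoking the compact embedding $H^\alpha_{rd}\hookrightarrow L^{p+1}(|x|^\gamma\,dx)$ proved in the appendix, the nonlinear integral converges strongly, so $\phi^*\neq 0$. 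Weak lower semicontinuity yields $K_{a,b}(\phi^*)\leq 0$ and $H_{a,b}(\phi^*)\leq m$. If $K_{a,b}(\phi^*)<0$, the unique rescaling $\lambda^*\in(0,1)$ onto $\{K_{a,b}=0\}$ would give $H_{a,b}((\phi^*)^{\lambda^*}_{a,b})<H_{a,b}(\phi^*)\leq m$, contradicting $H_{a,b}=S\geq m$ on the constraint; hence $\phi^*$ is a minimizer. A Lagrange multiplier analysis, pairing the Euler--Lagrange equation with the scaling generator and using that $\partial_\lambda K_{a,b}((\phi^*)^\lambda_{a,b})|_{\lambda=1}\neq 0$ at the strict maximum, forces the multiplier to vanish, so $\phi^*$ solves \eqref{gst}. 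Replacing $\phi^*$ by $|\phi^*|$ via a fractional Polya--Szego type inequality produces a nonnegative ground state.

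For part (3), I would argue by contradiction along the Cazenave--Lions scheme. Suppose instability: there exist $\varepsilon_0>0$, $u_{0,n}\to\phi$ in $H^\alpha_{rd}$, and times $t_n\geq 0$ with $\inf_\theta\|u_n(t_n)-e^{i\theta}\phi\|_{H^\alpha}\geq\varepsilon_0$. The condition $B<2$ together with Theorem \ref{t0} makes the action coercive on a neighborhood of the mass level $M(\phi)$, so by Proposition \ref{sscrtq} the solutions are global and $u_n(t_n)$ stays bounded in $H^\alpha_{rd}$ with $M(u_n(t_n))\to M(\phi)$ and $S(u_n(t_n))\to S(\phi)=m$ by conservation. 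Rescaling $u_n(t_n)$ onto $\{K_{a,b}=0\}$ by a unique $\lambda_n>0$ produces a minimizing sequence for $m$ with $\lambda_n\to 1$; the compactness argument of part (2) then gives, up to a subsequence and a phase, strong convergence in $H^\alpha_{rd}$ of $u_n(t_n)$ to a ground state at mass level $M(\phi)$, identified with $e^{i\theta_n}\phi$. This contradicts $\varepsilon_0>0$.

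The main obstacle is the compactness step in part (2): without the compact weighted embedding $H^\alpha_{rd}\hookrightarrow L^{p+1}(|x|^\gamma\,dx)$ reserved for the appendix, radial minimizing sequences could lose mass at spatial infinity and the weak limit might vanish, breaking the whole scheme. A secondary subtlety lies in part (3), where the final identification with $e^{i\theta_n}\phi$ tacitly uses uniqueness of the ground state modulo the $S^{1}$ phase action at the prescribed mass level; absent such uniqueness, the conclusion has to be read as convergence to the full set of ground states.
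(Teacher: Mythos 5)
Your parts (1)--(2) follow essentially the paper's own potential-well scheme: minimize $S=H_{a,b}$ on the constraint $\{K_{a,b}=0\}$, bound the minimizing sequence, use the compact radial embedding $H^\alpha_{rd}\hookrightarrow\hookrightarrow\Sigma$ to keep the weak limit nontrivial, pull the limit back onto the constraint via the monotonicity of $\lambda\mapsto H_{a,b}(\phi^\lambda_{a,b})$, and kill the Lagrange multiplier by pairing with the scaling generator (your ``$\partial_\lambda K_{a,b}((\phi^*)^\lambda_{a,b})|_{\lambda=1}\neq0$'' is exactly the paper's computation $-\mathcal{L}^2S(\phi)\geq\bar\mu\underline{\mu}S(\phi)>0$). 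The one step in (1) that does not work as you state it is the independence of $(a,b)$: if $K_{a,b}(\phi)=0$ and you rescale $\phi$ along the $(a',b')$-scaling onto $\{K_{a',b'}=0\}$, the rescaled function sits at the \emph{maximum} of $\lambda\mapsto S(\phi^\lambda_{a',b'})$, so you only get $m_{a',b'}\leq\max_\lambda S(\phi^\lambda_{a',b'})\geq S(\phi)$, and the chain of inequalities points the wrong way; ``comparing actions'' gives nothing unless $\lambda=1$ is already the maximizer on the $(a',b')$-curve, i.e.\ unless $K_{a',b'}(\phi)=0$. That holds precisely when $\phi$ is a free critical point of $S$, so independence should be deduced \emph{after} your part (2), as the paper does: the minimizer solves the unconstrained Euler--Lagrange equation, hence lies on every manifold $\{K_{a',b'}=0\}$, whence $m_{a',b'}\leq S(\phi)=m_{a,b}$ and equality by symmetry. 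This is an easy repair, but your stated route is a gap (also a small slip: the kinetic coefficient in $H_{a,b}$ is $\frac{\alpha b}{2(2a+Nb)}$, not $\frac{\alpha b}{2a+Nb}$).

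The stability argument in (3) has a more substantial gap. From $M(u_n(t_n))\to M(\phi)$ and $S(u_n(t_n))\to m$ you claim that rescaling $v_n:=u_n(t_n)$ onto $\{K_{a,b}=0\}$ produces a minimizing sequence with $\lambda_n\to1$. But again the rescaled point is the maximum along the scaling curve, so $S(v_n^{\lambda_n})\geq S(v_n)=m+o(1)$ and you obtain no upper bound $S(v_n^{\lambda_n})\leq m+o(1)$; to make this a minimizing sequence you would first need $K_{a,b}(v_n)\to0$ (or a quantitative coercivity of $S-m$ near the orbit of $\phi$), and conservation of $M$ and $S$ alone gives neither the sign nor the size of $K_{a,b}(v_n)$. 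Likewise $\lambda_n\to1$ is asserted, not proved. Finally, as you yourself flag, identifying the limit with $e^{i\theta_n}\phi$ requires uniqueness of the ground state modulo phase at the given mass, which is not available in this fractional inhomogeneous setting, so at best one gets stability of the set of ground states. To be fair, the paper's own proof of (3) is also incomplete (it reduces the claim to relative compactness of sequences with $\|\phi_n\|\to q$, $S(\phi_n)\to m$ and then only proves boundedness), but your specific mechanism ``rescale to the Nehari manifold to get a minimizing sequence'' would fail as written; a correct route in the regime $B<2$ is to connect $m$ with the minimization of $E$ at fixed mass and run the Cazenave--Lions compactness argument there.
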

Now, we give some invariant sets under the flow of \eqref{eq1}, which yield to a global solution. We denote, for $(a,b)\in\R_+^*\times\R_+$, the set
$$A_{a,b}:=\{\phi\in H^\alpha_{rd},\quad\mbox{s. t}\quad S(\phi)<m_{a,b}\quad\mbox{and}\quad K_{a,b}(\phi)>0\}.$$
\begin{thm}\label{t4}
Take $\epsilon=1$. Let $\gamma\geq0$, $\alpha\in(\frac N{2N-1},1)$, $2<p+1-\frac{2\gamma}{N-2\alpha}<\frac{2N}{N-2\alpha},$ $a>0$, $b\geq0$, $u_0\in A_{a,b}$ and $u\in C([0,T^*),H^\alpha_{rd})$ the maximal solution to \eqref{eq1}. Then $u$ is global and $u(t)\in A_{a,b}$ for any time $t\geq0$.
\end{thm}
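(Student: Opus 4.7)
The plan follows the Payne--Sattinger potential well strategy adapted to the fractional inhomogeneous setting. Proposition \ref{t1} provides a local maximal solution $u\in C([0,T^*),H^\alpha_{rd})$ with conservation of mass and energy, hence $S(u(t))=S(u_0)<m_{a,b}$ throughout the life span. Moreover $u_0\in A_{a,b}$ forces $u_0\not\equiv 0$ (since $K_{a,b}(0)=0$, which is incompatible with $K_{a,b}(u_0)>0$), so mass conservation ensures $u(t)\not\equiv 0$ for every $t\in[0,T^*)$.

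The first step is to show the invariance of $\{K_{a,b}>0\}$ along the flow. The functional $K_{a,b}$ is continuous on $H^\alpha_{rd}$ (for the weighted nonlinear part $\int|x|^\gamma|u|^{p+1}\,dx$ this uses the Gagliardo--Nirenberg inequality of Theorem \ref{t0}), so if there existed a first time $t_0\in(0,T^*)$ with $K_{a,b}(u(t_0))=0$, then $u(t_0)\neq 0$ would be an admissible competitor in the variational problem \eqref{m1}, giving $S(u(t_0))\geq m_{a,b}$. This contradicts $S(u(t_0))=S(u_0)<m_{a,b}$, so $K_{a,b}(u(t))>0$ on $[0,T^*)$, which together with the conservation of $S$ yields $u(t)\in A_{a,b}$.

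The second step is to derive a uniform $H^\alpha$ bound via the decomposition $S=H_{a,b}+\frac{1}{2a+Nb}K_{a,b}$. Differentiating $S(\lambda^a u(\cdot/\lambda^b))$ term by term and subtracting $\frac{1}{2a+Nb}$ times the derivative yields
\begin{equation*}
H_{a,b}(u)=\frac{\alpha b}{2a+Nb}\|(-\Delta)^{\alpha/2}u\|^2+\frac{a(p-1)+\gamma b}{(p+1)(2a+Nb)}\int|x|^\gamma|u|^{p+1}\,dx,
\end{equation*}
which is a sum of nonnegative terms under the hypotheses $\gamma\geq 0$, $a>0$, $b\geq 0$, $p>1$. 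Since $K_{a,b}(u(t))>0$, we obtain $0\leq H_{a,b}(u(t))<S(u_0)<m_{a,b}$ uniformly in $t$. If $b>0$, the first summand controls $\|(-\Delta)^{\alpha/2}u(t)\|$ directly; if $b=0$, the second summand controls $\int|x|^\gamma|u(t)|^{p+1}\,dx$, and the identity $\tfrac{1}{2}\|u(t)\|^2_{H^\alpha}=S(u_0)+\tfrac{1}{p+1}\int|x|^\gamma|u(t)|^{p+1}\,dx$ combined with mass conservation then bounds $\|u(t)\|_{H^\alpha}$. In either case the blow-up alternative from Proposition \ref{t1} forces $T^*=\infty$.

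The main obstacle lies in step one: ensuring that the weighted nonlinear term $\int|x|^\gamma|u|^{p+1}\,dx$ depends continuously on $u\in H^\alpha_{rd}$ despite the (possibly singular or growing) weight $|x|^\gamma$. This is precisely where the radial restriction, the inequality \eqref{ineq}, and the compact Sobolev embedding announced in the appendix play an essential role.
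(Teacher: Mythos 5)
Your argument is correct and follows the same Payne--Sattinger skeleton as the paper (invariance of $A_{a,b}$ by conservation of $S$ plus a continuity/first-time argument against the definition of $m_{a,b}$, then coercivity of $H_{a,b}=S-\frac{1}{2a+Nb}K_{a,b}$ on the set $\{K_{a,b}>0\}$ to get a uniform $H^\alpha$ bound), but the second step is organized differently. The paper does not treat the case $b=0$ directly: it first proves a separate lemma (Lemma \ref{stbb}) asserting that $A_{a,b}$ is independent of $(a,b)$, via a connectivity/openness argument on the sets $A_{a,b}^{\pm\delta}$, and then runs the coercivity estimate only for $(a,b)=(1,1)$, where $m>H_{1,1}(u(t))$ controls $\|(-\Delta)^{\alpha/2}u(t)\|^2$ and the weighted potential term, with mass conservation supplying the rest. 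You instead dispose of $b=0$ by noting that $H_{a,0}$ still controls $\int|x|^\gamma|u|^{p+1}\,dx$ (since $p>1$ under the stated hypotheses) and then recover the $H^\alpha$ bound from $\frac12\|u\|_{H^\alpha}^2=S(u)+\frac1{p+1}\int|x|^\gamma|u|^{p+1}\,dx$; this is more elementary and bypasses Lemma \ref{stbb} entirely, at the price of not obtaining the (independently interesting) statement that the stable set does not depend on $(a,b)$. Two further small points in your favor: you explicitly note that $u(t_1)\neq0$ (via mass conservation) before using it as a competitor in \eqref{m1}, a detail the paper's Lemma \ref{stb} leaves implicit, and you flag the continuity of $t\mapsto\int|x|^\gamma|u(t)|^{p+1}\,dx$ as resting on Theorem \ref{t0}. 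Your displayed formula for $H_{a,b}$ differs from the paper's by a factor $2$ in the gradient coefficient (a scaling-convention discrepancy); since only positivity of that coefficient is used, this does not affect the proof.
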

\subsection{Tools}
Let us collect some classical results needed along this manuscript. We start with some properties of the free Schr\"odinger kernel.
\begin{prop}\label{fre}
Denoting the free operator associated to the fractional Schr\"odinger equation
$$T_{\alpha}(t)\phi:=e^{it(-\Delta)^\frac\alpha2}\phi:=\mathcal F^{-1}(e^{-it|y|^{2\alpha}})*\phi,$$
yields
\begin{enumerate}
\item
$T_{\alpha}(t)u_0-i\epsilon\int_0^tT_{\alpha}(t-s)[|x|^\gamma u|u|^{p-1}]\,ds$ is the solution to the problem \eqref{eq1};
\item
$(T_{\alpha}(t))^*=T_{\alpha}(-t)$;
\item
$T_{\alpha}T_{\beta}=T_{\alpha+\beta}$;
\item
$T_{\alpha}(t)$ is an isometry of $L^2$.
\end{enumerate}
\end{prop}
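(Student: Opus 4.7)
The four assertions are all standard Fourier-analytic properties of the multiplier operator with symbol $e^{-it|y|^{2\alpha}}$, so the plan is uniformly to pass to the Fourier side and invoke Plancherel. Throughout I take the Fourier transform $\mathcal F$ as an isometry on $L^2$, so that $\widehat{T_\alpha(t)\phi}(y)=e^{-it|y|^{2\alpha}}\widehat\phi(y)$ by the convolution definition given in the statement.

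For item (2), I compute, for $f,g\in L^2$,
\begin{equation*}
\langle T_\alpha(t)f,g\rangle=\langle e^{-it|y|^{2\alpha}}\widehat f,\widehat g\rangle=\langle \widehat f,\overline{e^{-it|y|^{2\alpha}}}\,\widehat g\rangle=\langle \widehat f,e^{it|y|^{2\alpha}}\widehat g\rangle=\langle f,T_\alpha(-t)g\rangle,
\end{equation*}
where the second equality uses that $|y|^{2\alpha}\in\mathbb R$ so the complex conjugate of the symbol flips the sign of $t$. Item (3) follows in the same spirit: the symbols multiply, $e^{-it|y|^{2\alpha}}\cdot e^{-is|y|^{2\alpha}}=e^{-i(t+s)|y|^{2\alpha}}$, which on the Fourier side is exactly the symbol of $T_\alpha(t+s)$, hence $T_\alpha(t)T_\alpha(s)=T_\alpha(t+s)$ after inverting $\mathcal F$. (Here I read the assertion $T_\alpha T_\beta=T_{\alpha+\beta}$ of the statement as the time-composition law $T_\alpha(t)T_\alpha(s)=T_\alpha(t+s)$ suggested by (2).) For item (4), Plancherel gives
\begin{equation*}
\|T_\alpha(t)\phi\|_2=\|\widehat{T_\alpha(t)\phi}\|_2=\bigl\|e^{-it|y|^{2\alpha}}\widehat\phi\bigr\|_2=\|\widehat\phi\|_2=\|\phi\|_2,
\end{equation*}
since $|e^{-it|y|^{2\alpha}}|=1$ a.e.\ in $y$.

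For item (1), the plan is the usual Duhamel/variation-of-constants argument. Rewriting \eqref{eq1} as $i\dot u=(-\Delta)^\alpha u-\epsilon|x|^\gamma u|u|^{p-1}$ and applying $\mathcal F$ in the space variable, one gets the ODE (in $t$, parametrized by $y$)
\begin{equation*}
i\partial_t\widehat u(t,y)=|y|^{2\alpha}\widehat u(t,y)-\epsilon\,\widehat{F(u)}(t,y),\qquad F(u):=|x|^\gamma u|u|^{p-1},
\end{equation*}
with data $\widehat u(0,\cdot)=\widehat{u_0}$. Solving this linear inhomogeneous ODE by the integrating factor $e^{it|y|^{2\alpha}}$ and then inverting $\mathcal F$ produces exactly
\begin{equation*}
u(t)=T_\alpha(t)u_0-i\epsilon\int_0^t T_\alpha(t-s)\,F(u(s))\,ds,
\end{equation*}
which is the formula in (1). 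Conversely, differentiating this representation in $t$ (permissible once $u\in C([0,T^*),H^\alpha)$ and the source term lies in a suitable class) recovers the equation. The only delicate point in the whole proposition is therefore the justification that the Duhamel integral makes sense and is $t$-differentiable in the appropriate topology; this is where some care is needed, but it is a routine semigroup-theoretic verification using the $L^2$-boundedness of $T_\alpha(t)$ from item (4) together with the mapping properties of the nonlinearity $|x|^\gamma u|u|^{p-1}$ in the range of exponents treated in Theorem \ref{t0}. Items (2)--(4) are essentially one-line Plancherel arguments and present no obstacle.
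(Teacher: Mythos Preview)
Your arguments are correct and standard; the paper itself does not prove this proposition, but simply records it among the ``classical results needed along this manuscript'' in the Tools subsection, so there is nothing to compare against. Your reading of item (3) as the time-composition law $T_\alpha(t)T_\alpha(s)=T_\alpha(t+s)$ is the only sensible one in context, even though the paper's notation is sloppy.
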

\begin{defi}
A couple of real numbers $(q,r)$ such that $q,r\geq2$ is said to be admissible if 
$$\frac{4N+2}{2N-1}\leq q\leq\infty,\quad \frac2q+\frac{2N-1}r\leq N-\frac12,$$
or 
$$2\leq q\leq\frac{4N+2}{2N-1},\quad \frac2q+\frac{2N-1}r< N-\frac12.$$
\end{defi}
Recall the so-called Strichartz estimate \cite{gw}.
\begin{prop}\label{prop2}
Let $N \geq 2$, $\mu\in\R$, $\frac{N}{2N-1}<\alpha<1$ and $u_0\in H^\mu_{rd}$. Then
$$\|u\|_{L^q_t(L^r)\cap L^\infty_t(\dot H^\mu)}\lesssim\|u_0\|_{\dot H^\mu}+\|i\dot u-(-\Delta)^\alpha u\|_{L^{\tilde q'}_t(L^{\tilde r'})},$$
if $(q, r)$ and $(\tilde q,\tilde r)$ are admissible pairs such that $(\tilde q,\tilde r, N)\neq (2,\infty, 2)$ or $(q, r, N)\neq (2,\infty, 2)$ and satisfy the condition
$$\frac{2\alpha}q+\frac Nr=\frac N2-\mu,\quad \frac{2\alpha}{\tilde q}+\frac N{\tilde r}=\frac N2+\mu.$$
\end{prop}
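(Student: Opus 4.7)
The plan is to follow the classical route from a dispersive estimate, through a $TT^{*}$ argument, the Keel–Tao endpoint machinery, and the Christ–Kiselev lemma, to Strichartz estimates, with the twist that for the symbol $|\xi|^{2\alpha}$ with $\alpha<1$ the standard non-radial $L^{1}\to L^{\infty}$ decay is too weak to produce the admissible range stated in the proposition; one must instead exploit radial symmetry to obtain an improved dispersive rate. Accordingly, I would split the argument into four blocks.

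First, localize in frequency via a Littlewood–Paley decomposition $u_{0}=\sum_{j}P_{j}u_{0}$ and analyze $e^{-it(-\Delta)^{\alpha}}P_{j}$ on radial functions. Using the Hankel/Bessel representation of the radial Fourier transform together with van der Corput/stationary-phase estimates on the oscillatory integral with phase $t|\xi|^{2\alpha}$, derive a frequency-localized radial dispersive bound of the shape
$$\bigl\|e^{-it(-\Delta)^{\alpha}}P_{j}f\bigr\|_{L^{\infty}}\lesssim 2^{j\sigma}|t|^{-(N-\frac{1}{2})}\|P_{j}f\|_{L^{1}},$$
where the gain $N-\tfrac{1}{2}$ in the time decay (as opposed to the non-radial $N/(2\alpha)$) comes from the additional vanishing produced by the radial critical-point structure. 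This is precisely the decay rate that matches the threshold $\frac{4N+2}{2N-1}$ and the restriction $\frac{2}{q}+\frac{2N-1}{r}\le N-\tfrac{1}{2}$ in the admissibility definition.

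Second, perform the $TT^{*}$ argument: setting $T\phi:=e^{-it(-\Delta)^{\alpha}}\phi$, observe that $TT^{*}$ is a convolution in time with the unitary group, and combine the above dispersive bound with Hardy–Littlewood–Sobolev in the time variable and the $L^{2}$-isometry from Proposition \ref{fre} to obtain the homogeneous Strichartz inequality on frequency blocks for all $(q,r)$ satisfying the scaling $\frac{2\alpha}{q}+\frac{N}{r}=\frac{N}{2}$ and the non-endpoint version of the admissibility constraint. The endpoint pair $q=\frac{4N+2}{2N-1}$ is recovered by the Keel–Tao bilinear interpolation argument applied to the abstract dispersive rate $N-\tfrac{1}{2}$; this is the technically heaviest step, and the failure of Keel–Tao at the degenerate point $(q,r,N)=(2,\infty,2)$ is exactly what produces the exception in the hypothesis. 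The mixed inhomogeneous estimate linking $(q,r)$ with $(\tilde q,\tilde r)$ then follows either as a direct $TT^{*}$ bound off the endpoint or, at the endpoint, via the Christ–Kiselev lemma from the homogeneous inequality.

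Third, to pass from $\mu=0$ to general $\mu\in\R$, apply the previous estimate to $(-\Delta)^{\mu/2}u$, using that $(-\Delta)^{\mu/2}$ commutes with the semigroup and maps $\dot H^{\mu}$ isometrically to $L^{2}$; the relations $\frac{2\alpha}{q}+\frac{N}{r}=\frac{N}{2}-\mu$ and $\frac{2\alpha}{\tilde q}+\frac{N}{\tilde r}=\frac{N}{2}+\mu$ appear automatically from the $\mu=0$ scaling relation. Finally, reassemble the frequency pieces using Littlewood–Paley square-function estimates in $L^{q}_{t}L^{r}_{x}$. The main obstacle is undoubtedly the first block: justifying the extra $\tfrac{1}{2}$ gain in the time exponent for the non-classical symbol $|\xi|^{2\alpha}$ requires a careful oscillatory-integral analysis, since the Hessian of the phase is degenerate along the radial direction and the ordinary stationary-phase lemma does not apply — one must instead exploit Bessel-function asymptotics together with annular frequency localization. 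Once this radial dispersive bound is established, the remaining steps are a standard application of the $TT^{*}$–Keel–Tao–Christ–Kiselev toolkit.
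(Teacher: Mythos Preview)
The paper does not prove this proposition at all: it is quoted as a known tool with a citation to \cite{gw} (Guo--Wang), so there is no ``paper's own proof'' to compare against. Your outline is, in broad strokes, the strategy of \cite{gw}: a frequency-localized radial dispersive bound with decay $|t|^{-(2N-1)/2}$, then the abstract $TT^{*}$/Keel--Tao machinery at each dyadic scale, Littlewood--Paley reassembly to produce the derivative loss encoded in the $\mu$-shifted scaling relations, and Christ--Kiselev for the retarded inhomogeneous term. In that sense you have correctly identified both the mechanism behind the improved admissible range and the origin of the exclusion $(q,r,N)=(2,\infty,2)$.

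Two small comments on the sketch itself. First, the passage from $\mu=0$ to general $\mu$ is not literally ``apply the estimate to $(-\Delta)^{\mu/2}u$'': the target norm is $L^{q}_{t}L^{r}_{x}$, not $L^{q}_{t}\dot H^{\mu}_{x}$, so the $\mu$-shift enters through the frequency weight $2^{j\theta}$ in the localized dispersive bound and is recovered when you sum the square function, not by conjugating with a fractional power. Second, the hypothesis $\alpha>\frac{N}{2N-1}$ is what guarantees the admissible set is nonempty (equivalently, that the abstract Keel--Tao exponent $\sigma=\frac{2N-1}{2}$ exceeds the derivative cost), and it would be worth flagging where exactly this enters your argument. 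Neither of these is a genuine gap; they are places where the write-up should be tightened if you intend to give a self-contained proof rather than cite \cite{gw} as the paper does.
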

\begin{rem}
If we take $\mu=0$ in the previous inequality, we obtain the classical Strichartz estimate.
\end{rem}
The  following fractional Gagliardo-Nirenberg inequality \cite{hyz,hmon} will be useful.
\begin{lem}\label{gn}
Let $\alpha\in(0,1)$, $2\leq p\leq\frac{2N}{N-2\alpha}$ and $\theta:=\frac N\alpha(\frac12-\frac1p).$ Then
$$\|u\|_{{p}}\lesssim\|u\|^{1-\theta}\|u\|_{\dot H^{\alpha}}^\theta,$$
for any $u\in H^\alpha(\R^N)$.
\end{lem}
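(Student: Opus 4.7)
The plan is to reduce the inequality to the endpoint Sobolev embedding $\dot H^{\alpha}(\R^N)\hookrightarrow L^{p^*}(\R^N)$, where $p^*:=\frac{2N}{N-2\alpha}$, and then recover all intermediate exponents by H\"older interpolation on Lebesgue spaces. This is the shortest route and avoids invoking more machinery than necessary.

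First, I would dispose of the two endpoints. For $p=2$ one has $\theta=0$ and the inequality is trivial. For $p=p^*$ one has $\theta=1$ and the claim is exactly the fractional Sobolev embedding. Hence the only remaining case is $2<p<p^*$. For such a $p$, write
\[
\frac{1}{p}=\frac{1-\theta}{2}+\frac{\theta}{p^*},
\]
and solve for $\theta$: a short computation gives $\theta=\frac{N}{\alpha}\bigl(\frac12-\frac1p\bigr)$, matching the statement, and in particular $\theta\in(0,1)$. H\"older's inequality applied with these weights yields
\[
\|u\|_{p}\leq \|u\|_{2}^{\,1-\theta}\,\|u\|_{p^*}^{\,\theta},
\]
and the Sobolev embedding applied to the second factor produces the stated bound $\|u\|_{p}\lesssim \|u\|^{1-\theta}\|u\|_{\dot H^{\alpha}}^{\theta}$.

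The main obstacle is the endpoint embedding $\dot H^\alpha\hookrightarrow L^{p^*}$ itself; everything else in the argument is bookkeeping. To establish it I would use a Littlewood--Paley decomposition $u=\sum_{j\in\Z}\Delta_{j}u$, Bernstein's inequality $\|\Delta_{j}u\|_{p^*}\lesssim 2^{j\alpha}\|\Delta_{j}u\|_{2}$ (which is exactly the scaling needed so that $2^{j\alpha}$ matches the $\dot H^\alpha$ frequency weight), and the Littlewood--Paley characterization $\|u\|_{\dot H^{\alpha}}^{2}\sim \sum_{j}2^{2j\alpha}\|\Delta_{j}u\|_{2}^{2}$. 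Alternatively one can write $u=(-\Delta)^{-\alpha/2}[(-\Delta)^{\alpha/2}u]$ and appeal to the Hardy--Littlewood--Sobolev bound on the Riesz potential with kernel $|x|^{-(N-2\alpha)}$.

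Since Lemma~\ref{gn} is explicitly quoted from \cite{hyz,hmon}, the cleanest presentation is simply to cite the endpoint embedding from those references and carry out the two-line H\"older interpolation above as the only on-paper step. Note that radial symmetry is not used here, which is why the statement is given for all of $H^\alpha(\R^N)$ rather than the radial subspace.
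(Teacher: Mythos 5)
Your argument is correct. Note that the paper itself offers no proof of Lemma \ref{gn}: it is quoted directly from the references \cite{hyz,hmon}, so there is nothing on-paper to compare against. Your route is the standard one and is complete: the exponent bookkeeping checks out, since $\frac1p=\frac{1-\theta}2+\frac{\theta}{p^*}$ with $p^*=\frac{2N}{N-2\alpha}$ gives $\frac1p=\frac12-\frac{\theta\alpha}N$, i.e. $\theta=\frac N\alpha(\frac12-\frac1p)$, and $\theta\in[0,1]$ precisely on the stated range $2\leq p\leq p^*$; H\"older then reduces everything to the endpoint embedding $\dot H^\alpha(\R^N)\hookrightarrow L^{p^*}(\R^N)$, which is legitimate here because $\alpha\in(0,1)$ and $N\geq2$ force $\alpha<\frac N2$, and which you may either cite or prove by the Littlewood--Paley/Bernstein or Hardy--Littlewood--Sobolev argument you sketch. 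Your remark that radial symmetry plays no role is also accurate and consistent with the statement being formulated on all of $H^\alpha(\R^N)$ (radiality only enters later, in the Strauss-type estimate of Lemma \ref{strs} and the compact embedding of Lemma \ref{cmpct}).
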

We give also an estimate similar to Strauss \cite{st} inequality in the fractional case \cite{co}.
\begin{lem}\label{strs}
Let $N\geq2$ and $\frac12<\alpha<\frac N2$. Then
\begin{equation}\label{straus}
\sup_{x\neq0}|x|^{\frac N2-\alpha}|u(x)|\leq C(N,\alpha)\|(-\Delta)^\frac\alpha2u\|,
\end{equation}
for any $u\in \dot H^\alpha(\R^N)$, where
$$C(N,\alpha)=\Big(\frac{\Gamma(2\alpha-1)\Gamma(\frac N2-\alpha)\Gamma(\frac N2)}{2^{2\alpha}\pi^{\frac N2}\Gamma^2(\alpha)\Gamma(\frac N2-1+\alpha)}\Big)^\frac12$$
and $\Gamma$ is the Gamma function.
\end{lem}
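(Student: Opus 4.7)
The plan is to reduce the pointwise bound to a Hankel-type integral via the radial Fourier inversion formula, apply Cauchy--Schwarz to extract $\|(-\Delta)^{\alpha/2}u\|$, and then evaluate the remaining Bessel integral by the Weber--Schafheitlin formula. As is standard for Strauss-type inequalities, the bound only makes sense once $u$ is radial, and I would work under this assumption throughout (which is the paper's setting via $H^\alpha_{rd}$).

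First, for a radial $u(x)=\tilde u(|x|)$, I would use the spherical average $\int_{S^{N-1}}e^{it\omega_1}d\sigma(\omega)=(2\pi)^{N/2}t^{1-N/2}J_{N/2-1}(t)$ in Fourier inversion to write
$$\tilde u(r)=(2\pi)^{-N/2}\,r^{1-N/2}\int_0^\infty\widetilde{\hat u}(\rho)\,J_{N/2-1}(r\rho)\,\rho^{N/2}\,d\rho,$$
where $\widetilde{\hat u}$ denotes the radial profile of $\hat u$. Splitting the weight as $\rho^{N/2}=\rho^{(N-1)/2+\alpha}\cdot\rho^{(1-2\alpha)/2}$ and applying Cauchy--Schwarz gives
$$r^{N-2}|\tilde u(r)|^2\leq(2\pi)^{-N}\Big(\int_0^\infty|\widetilde{\hat u}(\rho)|^2\rho^{N-1+2\alpha}\,d\rho\Big)\Big(\int_0^\infty J_{N/2-1}^2(r\rho)\,\rho^{1-2\alpha}\,d\rho\Big).$$
Passing to polar coordinates and applying Parseval, the first factor equals $\omega_{N-1}^{-1}(2\pi)^N\|(-\Delta)^{\alpha/2}u\|^2$ with $\omega_{N-1}=2\pi^{N/2}/\Gamma(N/2)$. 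In the second factor, the substitution $s=r\rho$ scales out $r^{2\alpha-2}$ and leaves $I(N,\alpha):=\int_0^\infty J_{N/2-1}^2(s)s^{1-2\alpha}ds$, producing
$$r^{N-2\alpha}|\tilde u(r)|^2\leq\omega_{N-1}^{-1}\,I(N,\alpha)\,\|(-\Delta)^{\alpha/2}u\|^2.$$

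The main technical step, and the only nontrivial one, is the convergence and explicit evaluation of $I(N,\alpha)$. Since $J_{N/2-1}(s)\sim c_N s^{N/2-1}$ as $s\to 0$ and $J_{N/2-1}(s)\sim c\,s^{-1/2}\cos(s-\theta)$ as $s\to\infty$, integrability at $0$ requires $\alpha<N/2$ and at $\infty$ requires $\alpha>1/2$; these are precisely the assumptions in the lemma, and this is the structural reason for the stated range of $\alpha$. The closed form comes from the Weber--Schafheitlin identity
$$\int_0^\infty J_\nu^2(s)\,s^{-\lambda}\,ds=\frac{\Gamma(\lambda)\,\Gamma\bigl(\nu-\tfrac{\lambda-1}{2}\bigr)}{2^\lambda\,\Gamma\bigl(\tfrac{\lambda+1}{2}\bigr)^2\,\Gamma\bigl(\nu+\tfrac{\lambda+1}{2}\bigr)},$$
specialized to $\nu=N/2-1$ and $\lambda=2\alpha-1$, which yields
$$I(N,\alpha)=\frac{\Gamma(2\alpha-1)\,\Gamma(N/2-\alpha)}{2^{2\alpha-1}\,\Gamma(\alpha)^2\,\Gamma(N/2-1+\alpha)}.$$
Combining with $\omega_{N-1}^{-1}$, taking the square root, and supremising over $x\neq 0$ recovers precisely the constant $C(N,\alpha)$ displayed in the statement.
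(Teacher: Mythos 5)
Your proof is correct: the paper does not actually prove Lemma \ref{strs} but quotes it from \cite{co}, and your argument (radial Fourier--Bessel representation of $u$, Cauchy--Schwarz with the weight split $\rho^{N/2}=\rho^{(N-1)/2+\alpha}\,\rho^{(1-2\alpha)/2}$, and the Weber--Schafheitlin evaluation of $\int_0^\infty J_{N/2-1}^2(s)\,s^{1-2\alpha}\,ds$, convergent exactly for $\tfrac12<\alpha<\tfrac N2$) is precisely the standard proof of that cited result, and it reproduces the sharp constant $C(N,\alpha)$ after combining with $\omega_{N-1}^{-1}=\Gamma(\tfrac N2)/(2\pi^{N/2})$. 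You are also right to insist on radial symmetry: the inequality fails for general $u\in\dot H^\alpha$ (translating a fixed bump far from the origin makes the left side grow while the right side is unchanged), and although the statement as printed omits the word radial, the paper only ever applies it to radial functions, so your reading is the intended one; the only remaining (minor) point of rigor is to run the computation for radial Schwartz functions and conclude by density in $\dot H^\alpha_{rd}$.
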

Sobolev injection \cite{AC1,pl} gives a meaning to the energy and several computations done in this note.
\begin{lem}\label{sblv}
Let $N\geq2$, $p\in(1,\infty)$ and $\alpha\in(0,1)$, then 
\begin{enumerate}
\item
$H^\alpha \hookrightarrow L^q$ for any $q\in[2,\frac{2N}{N-2\alpha}]$;
\item
the following injection $H^\alpha_{rd} \hookrightarrow\hookrightarrow L^q$ is compact for any $q\in(2,\frac{2N}{N-2\alpha})$.
\end{enumerate}
\end{lem}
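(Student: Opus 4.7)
The plan is to treat the two parts separately, relying on known fractional Sobolev embeddings on the full space for part (1) and on a standard Rellich–Strauss concentration argument for part (2).

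For part (1), the endpoint $q=2$ is immediate from the definition $\|u\|_{H^\alpha}^2=\|u\|^2+\|(-\Delta)^{\alpha/2}u\|^2$. At the upper endpoint $q=\tfrac{2N}{N-2\alpha}$, I would use the classical critical Sobolev embedding $\dot H^\alpha(\R^N)\hookrightarrow L^{2N/(N-2\alpha)}(\R^N)$, which follows by writing $u=(-\Delta)^{-\alpha/2}[(-\Delta)^{\alpha/2}u]$ and applying the Hardy–Littlewood–Sobolev inequality to the Riesz potential $(-\Delta)^{-\alpha/2}$. For intermediate $q\in\bigl(2,\tfrac{2N}{N-2\alpha}\bigr)$, I would interpolate by Hölder: choosing $\theta\in(0,1)$ with $\tfrac1q=\tfrac{1-\theta}{2}+\tfrac{\theta(N-2\alpha)}{2N}$ gives $\|u\|_q\leq\|u\|_2^{1-\theta}\|u\|_{2N/(N-2\alpha)}^{\theta}\lesssim\|u\|_{H^\alpha}$. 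Alternatively one may simply quote Lemma \ref{gn}.

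For part (2), let $(u_n)\subset H^\alpha_{rd}$ be a bounded sequence, say $\|u_n\|_{H^\alpha}\leq M$. The local step uses the fractional Rellich–Kondrachov embedding on each ball: $H^\alpha(B_R)\hookrightarrow\hookrightarrow L^q(B_R)$ for every $q<\tfrac{2N}{N-2\alpha}$ and every $R>0$. A standard diagonal extraction over $R\in\N$ then produces a subsequence, still denoted $(u_n)$, converging in $L^q_{loc}(\R^N)$ to some radial $u\in L^q_{loc}$. For the tail, Lemma \ref{strs} applied to radial functions yields the pointwise bound $|u_n(x)|\leq C|x|^{-(N/2-\alpha)}\|u_n\|_{H^\alpha}$ for $x\neq 0$, whence
\begin{equation*}
\int_{|x|>R}|u_n|^q\,dx\leq\Big(\sup_{|x|>R}|u_n(x)|\Big)^{q-2}\int_{|x|>R}|u_n|^2\,dx\leq C_{M}\,R^{-(q-2)(N/2-\alpha)},
\end{equation*}
which tends to $0$ uniformly in $n$ as $R\to\infty$ because $q>2$ and $\alpha<N/2$. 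Combining the uniform tail decay with the $L^q_{loc}$ convergence and applying a triangle inequality yields $u_n\to u$ in $L^q(\R^N)$, giving compactness. Fatou and the uniform $H^\alpha$ bound then place $u$ in $H^\alpha_{rd}$.

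The main obstacle is the tail estimate. The pointwise Strauss bound in Lemma \ref{strs} requires $\alpha>\tfrac12$, while Lemma \ref{sblv}(2) is stated for all $\alpha\in(0,1)$. For the regime $\alpha\leq\tfrac12$ one would need a substitute radial decay estimate for $H^\alpha_{rd}$ (for instance a weighted inequality of the form $\||x|^{\sigma}u\|_{L^{q_0}}\lesssim\|u\|_{H^\alpha}$ for radial $u$, obtained by combining fractional Hardy inequalities with a cutoff/annular decomposition) so that the integral $\int_{|x|>R}|u_n|^q\,dx$ can still be shown to vanish uniformly in $n$. Once such a decay is in hand, the rest of the argument is the routine concentration–compactness scheme described above.
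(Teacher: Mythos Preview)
The paper does not actually prove this lemma: it is stated with citations to Adams \cite{AC1} and Lions \cite{pl} and then used as a black box. So there is no in-paper proof to compare against.

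Your argument for part (1) is the standard one and is fine.

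For part (2), your Strauss--Rellich scheme is the classical Lions strategy and is correct when $\alpha>\tfrac12$, as you yourself note. The gap you flag is real: Lemma~\ref{strs} in the paper explicitly requires $\tfrac12<\alpha<\tfrac N2$, so your tail estimate
\[
\int_{|x|>R}|u_n|^q\,dx \le \Big(\sup_{|x|>R}|u_n|\Big)^{q-2}\|u_n\|^2
\]
is not justified for $\alpha\in(0,\tfrac12]$, while the lemma is stated for the full range $\alpha\in(0,1)$. Your suggested patch via ``fractional Hardy plus cutoff'' is too vague to count as a proof; in particular a weighted bound of the form $\||x|^{\sigma}u\|_{q_0}\lesssim\|u\|_{H^\alpha}$ does not by itself give uniform smallness of $\int_{|x|>R}|u_n|^q$.

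A way to close the gap without pointwise decay is Lions' original argument adapted to $H^\alpha$: for radial $u$ one has, on each annulus $A_j=\{2^j\le|x|<2^{j+1}\}$, the scaled Gagliardo--Nirenberg bound
\[
\|u\|_{L^q(A_j)} \lesssim 2^{-j\big(\frac N2-\frac Nq\big)}\|u\|_{L^2(A_j)}^{1-\theta}\,\|u\|_{H^\alpha}^{\theta},
\qquad \theta=\tfrac N\alpha\big(\tfrac12-\tfrac1q\big),
\]
obtained by rescaling $A_j$ to the unit annulus (radiality reduces this to a one–dimensional problem on $[1,2]$, where the embedding is routine). Summing over $j\ge J$ and using $\sum_j\|u\|_{L^2(A_j)}^2\le\|u\|^2$ together with $q>2$ gives the uniform tail decay $\sum_{j\ge J}\|u_n\|_{L^q(A_j)}^q\to 0$ as $J\to\infty$, valid for every $\alpha\in(0,\tfrac N2)$. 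This, combined with your local Rellich step, finishes the proof.
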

\begin{defi}
We define the space
$$\Sigma:=\{u\quad\mbox{radial and measurable in}\quad \R^N,\quad\mbox{s. t.}\quad u\in L^{1+p}(|x|^\gamma\,dx)\}$$
endowed with the norm 
$$\|u\|_\Sigma:=\Big(\int|x|^\gamma|u(x)|^{1+p}\,dx\Big)^\frac1{1+p}.$$
\end{defi}
Finally, we give a compact Sobolev injection known in the classical case \cite{JB,kw}.
\begin{lem}\label{cmpct}
Let $\gamma\in\R$, $\alpha\in(0,1)$ and $2 <p+1-\frac{2\gamma}{N-2\alpha}<\frac{2N}{N-2\alpha}.$ Then, the following injection is compact
\begin{equation}\label{cmpctt}
H^\alpha_{rd}(\R^N)\hookrightarrow\hookrightarrow\Sigma.\end{equation}
\end{lem}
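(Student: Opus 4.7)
\bigskip

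\noindent\textbf{Proof plan.} Let $(u_n)\subset H^\alpha_{rd}$ be bounded. By reflexivity we extract a subsequence with $u_n\rightharpoonup u$ in $H^\alpha_{rd}$; the goal is to upgrade this to $u_n\to u$ in $\Sigma$. The plan is the classical truncation trick: split
$$\int|x|^\gamma|u_n-u|^{1+p}\,dx=\mathcal I_1(\varepsilon)+\mathcal I_2(\varepsilon,R)+\mathcal I_3(R),$$
where the three integrals are over $\{|x|<\varepsilon\}$, $\{\varepsilon\leq|x|\leq R\}$ and $\{|x|>R\}$ respectively. The idea is to control the two tails uniformly in $n$ by choosing $\varepsilon$ small and $R$ large, and to handle the middle term by compactness on the annulus; the two-sided hypothesis $2<p+1-\frac{2\gamma}{N-2\alpha}<\frac{2N}{N-2\alpha}$ will dictate the admissible H\"older exponents in both tails.

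\smallskip

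\noindent For the middle piece, the weight $|x|^\gamma$ is bounded on $\{\varepsilon\leq|x|\leq R\}$ by constants depending on $\varepsilon,R$. Pick any exponent $q\in(2,\frac{2N}{N-2\alpha})$; by Lemma \ref{sblv}(2), $u_n\to u$ strongly in $L^q(\R^N)$. Interpolating $L^{1+p}$ on the annulus between $L^q$ and an $L^s$ norm controlled by Sobolev (Lemma \ref{sblv}(1)), one obtains $\mathcal I_2(\varepsilon,R)\to 0$ as $n\to\infty$, for fixed $\varepsilon,R$. For the outer tail I apply the Strauss-type inequality of Lemma \ref{strs}: writing
$$|x|^\gamma|u_n-u|^{1+p}=\big(|x|^{\frac{N}{2}-\alpha}|u_n-u|\big)^{p-1}\cdot|x|^{\gamma-(p-1)(\frac{N}{2}-\alpha)}|u_n-u|^2,$$
the first factor is bounded by $C\|u_n-u\|_{\dot H^\alpha}^{p-1}$. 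The hypothesis $p+1-\frac{2\gamma}{N-2\alpha}>2$ rewrites exactly as $\gamma-(p-1)(\frac{N}{2}-\alpha)<0$, so on $\{|x|>R\}$ the remaining weight is bounded by $R^{\gamma-(p-1)(N/2-\alpha)}$, yielding
$$\mathcal I_3(R)\lesssim R^{\gamma-(p-1)(\frac{N}{2}-\alpha)}\,\|u_n-u\|_{\dot H^\alpha}^{p-1}\|u_n-u\|_2^2\longrightarrow 0$$
as $R\to\infty$, uniformly in $n$.

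\smallskip

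\noindent The main obstacle is the inner tail $\mathcal I_1(\varepsilon)$, since the Strauss bound of Lemma \ref{strs} is singular at the origin and must be replaced by a different H\"older decomposition whose admissibility depends on the sign of $\gamma$. The idea is to decompose
$$|x|^\gamma|u_n-u|^{1+p}=\big(|x|^{\frac{N}{2}-\alpha}|u_n-u|\big)^{\sigma}\cdot|x|^{\gamma-\sigma(\frac{N}{2}-\alpha)}|u_n-u|^{1+p-\sigma}$$
with $\sigma\geq 0$ chosen so that the residual weight exponent $\gamma-\sigma(\frac{N}{2}-\alpha)$ is strictly larger than $-N$ (local integrability of $|x|^{\gamma-\sigma(N/2-\alpha)}$ on $B_\varepsilon$) and the residual power $1+p-\sigma$ lies in $[2,\frac{2N}{N-2\alpha}]$ (so that Sobolev controls $\|u_n-u\|_{L^{1+p-\sigma}}$). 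The upper hypothesis $p+1-\frac{2\gamma}{N-2\alpha}<\frac{2N}{N-2\alpha}$ is precisely what makes the interval of admissible $\sigma$ nonempty; applying Strauss to the first factor and H\"older (with a suitable $L^a$–$L^{a'}$ split) to the second yields a bound of the form $\mathcal I_1(\varepsilon)\lesssim\varepsilon^{\kappa}$ with $\kappa>0$, uniformly in $n$. The technical point to verify carefully is the existence of a single admissible $\sigma$ in all sub-cases (in particular $\gamma<0$, where one effectively takes $\sigma=0$ and must place the weight itself in $L^a(B_\varepsilon)$ with conjugate exponent compatible with Sobolev). Combining Steps 1–3 via a standard $\epsilon/3$-argument then concludes.
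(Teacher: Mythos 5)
Your plan is structurally the same as the paper's appendix proof: split $\int|x|^\gamma|u_n-u|^{1+p}\,dx$ over a small ball, an annulus and an exterior region, use the Strauss inequality of Lemma \ref{strs} for the exterior (your observation that $p+1-\frac{2\gamma}{N-2\alpha}>2$ is equivalent to $\gamma-(p-1)(\frac N2-\alpha)<0$ is exactly the point the paper exploits), and use compactness of the radial embedding on the annulus. The exterior step is complete. The inner ball, however, is only announced: you explicitly leave open ``the existence of a single admissible $\sigma$'', and that is the whole content of the step, so you should close it. It does close: for $\gamma\ge0$ take $\sigma=\frac{2\gamma}{N-2\alpha}$ (the same splitting as in the proof of Theorem \ref{t0}); then the residual weight is $|x|^0=1$, the residual power is $q_0:=1+p-\frac{2\gamma}{N-2\alpha}<\frac{2N}{N-2\alpha}$ strictly, and H\"older on $B_\varepsilon$ with exponent $r=\frac{2N}{(N-2\alpha)q_0}>1$ together with Strauss and Lemma \ref{sblv}(1) gives $\mathcal I_1(\varepsilon)\lesssim\|u_n-u\|_{H^\alpha}^{1+p}\varepsilon^{N/r'}$, uniformly in $n$. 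For $\gamma<0$ take $\sigma=0$ and put $|x|^\gamma\in L^a(B_\varepsilon)$ with $\frac{2N}{2N-(1+p)(N-2\alpha)}\le a<\frac N{|\gamma|}$; this interval is nonempty precisely because $1+p<\frac{2(N+\gamma)}{N-2\alpha}$, and it yields $\mathcal I_1(\varepsilon)\lesssim\varepsilon^{\gamma+N/a}$ with $\gamma+N/a>0$. This flexibility is genuinely needed, and is in fact a small improvement over the paper, which only writes out $\gamma>0$ and bounds $|x|^\gamma\le\varepsilon^\gamma$, tacitly assuming $1+p\le\frac{2N}{N-2\alpha}$; under the stated hypotheses with $\gamma>0$ one may well have $1+p>\frac{2N}{N-2\alpha}$.

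For the same reason your annulus step, as written, does not work in the whole admissible range: interpolating $L^{1+p}$ between $L^q$, $q\in(2,\frac{2N}{N-2\alpha})$, and an $L^s$ norm controlled by Sobolev requires $1+p\le s\le\frac{2N}{N-2\alpha}$, which fails when $1+p>\frac{2N}{N-2\alpha}$ (possible for $\gamma>0$). The repair is the paper's argument, which you already use on the exterior: on $\{\varepsilon\le|x|\le R\}$ the weight and the factor $(|x|^{\frac N2-\alpha}|u_n-u|)^{p-1}$ are bounded by $C(\varepsilon,R)\|u_n-u\|_{\dot H^\alpha}^{p-1}$ via Lemma \ref{strs}, so $\mathcal I_2\lesssim_{\varepsilon,R}\int_{\varepsilon\le|x|\le R}|u_n-u|^2\,dx\to0$ by Rellich (or by your $L^q$ convergence and H\"older on the bounded annulus). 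With these two repairs your $\epsilon/3$ scheme closes and the proof coincides in substance with the one in the paper.
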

For the reader convenience, we give a proof in the appendix.
\section{Proof of Theorem \ref{t0}}
In this section, we prove the interpolation inequality \eqref{ineq}. First, using Lemma \ref{strs}, we get
\begin{eqnarray*}
\int|x|^\gamma|u(x)|^{1+p}\,dx
&=&\int(|x|^{\frac N2-\alpha}|u(x)|)^{\frac{2\gamma}{N-2\alpha}}|u(x)|^{1+p-\frac{2\gamma}{N-2\alpha}}\\
&\lesssim&\|u\|_{\dot H^\alpha}^{\frac{2\gamma}{N-2\alpha}}\int|u(x)|^{1+p-\frac{2\gamma}{N-2\alpha}}.
\end{eqnarray*}
Now, thanks to Lemma \ref{gn}, yields
\begin{eqnarray*}
\int|x|^\gamma|u(x)|^{1+p}\,dx
&\lesssim&\|u\|_{\dot H^\alpha}^\frac{2\gamma}{N-2\alpha}\|u\|^{1+p-\frac{2\gamma}{N-2\alpha}}_{1+p-\frac{2\gamma}{N-2\alpha}}\\
&\lesssim&\|u\|_{\dot H^\alpha}^\frac{2\gamma}{N-2\alpha}(\|u\|^{1-\mu}\|u\|_{\dot H^\alpha}^\mu)^{1+p-\frac{2\gamma}{N-2\alpha}}\\
&\lesssim&\|u\|^{(1-\mu)(1+p-\frac{2\gamma}{N-2\alpha})}\|u\|_{\dot H^\alpha}^{\mu(1+p-\frac{2\gamma}{N-2\alpha})+\frac{2\gamma}{N-2\alpha}}.
\end{eqnarray*}
The proof is ended.
\section{Proof of Theorem \ref{t0'}}
Using Theorem \ref{t0}, there exists a sequence $(v_n)$ in $H^{\alpha}_{rd}$ such that 
$$\beta=\lim_nJ(v_n).$$
Denoting for $a,b\in\R$, the scaling $u^{a,b}:=a u(b .)$, we compute
\begin{gather*}
\|(-\Delta)^\frac\alpha2u^{a,b}\|^2=a^2b^{2\alpha-N}\|(-\Delta)^\frac\alpha2u\|^2;\\
\|u^{a,b}\|^2=a^2b^{-N}\|u\|^2;\\
\int|x|^\gamma|u^{a,b}(x)|^{1+p}\,dx=a^{1+p}b^{-N-\gamma}\int|x|^\gamma|u|^{1+p}(x)\,dx.
\end{gather*}
It follows that 
$$J(u^{a,b})=J(u).$$
Now, we choose 
$$\mu_n:=\Big(\frac{\|v_n\|}{\|(-\Delta)^\frac\alpha2 v_n\|}\Big)^\frac1\alpha\quad\mbox{and}\quad \lambda_n:=\frac{\|v_n\|^{\frac N{2\alpha}-1}}{{\|(-\Delta)^\frac\alpha2 v_n\|}^\frac N{2\alpha}}.$$
Thus, $\psi_n:=v_n^{\lambda_n,\mu_n}$ satisfies
$$\|\psi_n\|=\|(-\Delta)^\frac\alpha2\psi_n\|=1\quad\mbox{and}\quad \beta=\lim_nJ(\psi_n).$$
Then, $\psi_n\rightharpoonup\psi$ in $H^\alpha_{rd}$ and using Sobolev injection \eqref{cmpctt}, we get for a subsequence denoted also $(\psi_n)$, $$\int|x|^\gamma|\psi_n|^{1+p}\,dx\rightarrow\int|x|^\gamma|\psi|^{1+p}\,dx.$$
This implies that, when $n$ goes to infinity
$$J(\psi_n)=\frac1{\int|x|^\gamma|\psi_n|^{1+p}\,dx}\rightarrow\frac1{\int|x|^\gamma|\psi|^{1+p}\,dx}.$$
Using lower semi continuity of the $H^\alpha$ norm, we get 
$$\|\psi\|\leq1\quad\mbox{and}\quad \|(-\Delta)^\frac\alpha2\psi\|\leq1.$$
Then, $J(\psi)< \beta$ if $\|\psi\|\|(-\Delta)^\frac\alpha2\psi\|<1$, which implies that
$$\|\psi\|=1\quad\mbox{and}\quad \|(-\Delta)^\frac\alpha2\psi\|=1.$$
It follows that $$\psi_n\rightarrow\psi\quad\mbox{in}\quad H^\alpha$$
and
$$\beta=J(\psi)=\frac1{\int|x|^\gamma|\psi|^{1+p}\,dx}.$$
The minimizer satisfies the Euler equation
$$\partial_\varepsilon J(\psi+\varepsilon\eta)_{|\varepsilon=0}=0,\quad\forall \eta\in C_0^\infty\cap H^\alpha_{rd}.$$
Hence $\psi$ satisfies \eqref{euler}. This completes the proof.
\section{Proof of Theorem \ref{t0''}}
Thanks to Theorem \ref{t0'}, we know that
$C(N,p,\gamma,\alpha)=\frac1\beta=\int|x|^\gamma|\psi(x)|^{1+p}\,dx$, where $\psi$ is given in Theorem \ref{t0'}.
Take, for $a,b\in\R$, the scaling $\psi=\phi^{a,b}:=a\phi(b.)$. Then, the fact that
$$B(-\Delta)^\alpha\psi+A\psi-\beta(p+1)|x|^\gamma\psi|\psi|^{p-1}=0,$$
implies that
$$Aa\Big(-\frac BAb^{2\alpha}(-\Delta)^\alpha\phi+\phi-\frac\beta A(p+1)a^{p-1}b^{-\gamma}|x|^\gamma\phi|\phi|^{p-1}\Big)=0.$$
Choosing
$$b=\Big(\frac AB\Big)^\frac1{2\alpha}\quad\mbox{and}\quad a=\Big((\frac{A}B)^\frac\gamma{2\alpha}\frac A{\beta(1+p)}\Big)^\frac1{p-1},$$
it follows that 
$$-(-\Delta)^\alpha\phi+\phi-|x|^\gamma\phi|\phi|^{p-1}=0.$$
Now, since 
$$\|\psi\|=1=ab^{-\frac N2}\|\phi\|,$$
we get
$$\beta=\frac{A}{1+p}(\frac AB)^{-\frac B2}\|\phi\|^{p-1}.$$
The proof is closed.
\section{Well-posedness}
\subsection{Proof of Proposition \ref{t1}}
The proof follows using Theorem \ref{t0} via Strichartz estimate in Proposition \ref{prop2}, the integral formula in Proposition \ref{fre} and a classical Picard fixed point method,  arguing like \cite{Cas1,gw}.
\subsection{Proof of Proposition \ref{sscrtq}}
Assume that $B<2$ or $B=2$ and $M(0)<\Big(\frac{p+1}{2C(N,p,\gamma,\alpha)}\Big)^\frac2A$. With contradiction, assume that $T^*=\infty$. Then $$\limsup_{T^*}\|u(t)\|_{\dot H^\alpha}=\infty.$$
Write, using previous notations and Theorem \ref{t0},
\begin{eqnarray*}
2E(t)
&=&\|u\|_{\dot H^\alpha}^2-\frac2{p+1}\int|x|^\gamma|u|^{1+p}\,dx\\
&\geq&\|u\|_{\dot H^\alpha}^2-\frac{2C(N,p,\gamma,\alpha)}{p+1}\|u\|^{A}\|u\|_{\dot H^\alpha}^{B}\\
&\geq&\|u\|_{\dot H^\alpha}^2\Big(1-\frac{2C(N,p,\gamma,\alpha)}{p+1}M(0)^{\frac A2}\|u\|_{\dot H^\alpha}^{B-2}\Big).
\end{eqnarray*}
This contradicts the infinite limit above.
\section{Proof of Theorem \ref{t3}}
In this section we prove the existence of a ground state solution to \eqref{gst} which is stable for some low range of the nonlinearity exponent. With a direct computation, yields, for $a,b\in\R_+$ and $\phi\in H^\alpha$,
\begin{gather*}
K_{a,b}(\phi)=\frac12(2a+Nb)\|\phi\|^2+\frac12(2a+(N-\alpha)b)\|(-\Delta)^\frac\alpha2\phi\|^2-(a+\frac{b(\gamma+N)}{1+p})\int|x|^\gamma|\phi|^{1+p}dx;\\
H_{a,b}(\phi)=\frac{\alpha b}{2(2a+Nb)}\|(-\Delta)^\frac\alpha2\phi\|^2+\frac{b\gamma+a(p-1)}{(1+p)(2a+Nb)}\int|x|^\gamma|\phi|^{1+p}dx.
\end{gather*}
Define, the quadratic and nonlinear parts of $K_{a,b}$ as follows
$$K_{a,b}^Q(\phi)=\frac12(2a+Nb)\|\phi\|^2+\frac12(2a+(N-\alpha)b)\|(-\Delta)^\frac\alpha2\phi\|^2,\quad K^N_{a,b}:=K_{a,b}-K^Q_{a,b}.$$
\begin{rem}{Note that, in this section}
\begin{enumerate}
\item[1)] $(a,b) \in \mathbb{R}^*_{+}\times\R_+$;
\item[2)] the proof of the Theorem \ref{t3} is based on several lemmas;
\item[3)] we write, for easy notation, $\phi^\lambda:=\phi^\lambda_{a,b}$, $K=K_{a,b}$, $K^{Q}=K_{a,b}^{Q}$, $K^{N}=K_{a,b}^{N}$, $\mathcal{L}=\mathcal{L}_{a,b}$ and $H=H_{a,b}.$
\end{enumerate}
\end{rem}
\subsection{Existence of ground state}
\begin{lem}  Let $0\neq \phi \in H^{\alpha}_{rd},$ then 
\begin{enumerate}
 \item[1)] $min(\mathcal{L} H(\phi),H(\phi)) > 0$;
 \item[2)] $\lambda \mapsto H(\phi^{\lambda})$ is increasing.
\end{enumerate}
\end{lem}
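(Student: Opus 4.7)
The plan is a direct computation followed by a sign analysis: I would exhibit $H(\phi^\lambda)$ as a sum of two monomials in $\lambda$ with non-negative coefficients and strictly positive exponents, after which both assertions drop out.

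First I would record the scaling laws for the three quantities appearing in the explicit formula for $H=H_{a,b}$ stated at the top of the section. A change of variables gives $\|\phi^\lambda\|^2 = \lambda^{2a+Nb}\|\phi\|^2$, $\|(-\Delta)^{\alpha/2}\phi^\lambda\|^2 = \lambda^{2a+(N-2\alpha)b}\|(-\Delta)^{\alpha/2}\phi\|^2$, and $\int|x|^\gamma|\phi^\lambda|^{1+p}\,dx = \lambda^{(1+p)a+(N+\gamma)b}\int|x|^\gamma|\phi|^{1+p}\,dx$. Substituting yields
$$H(\phi^\lambda) \;=\; c_1\lambda^{e_1} + c_2\lambda^{e_2},$$
with $e_1 = 2a+(N-2\alpha)b$ and $e_2 = (1+p)a+(N+\gamma)b$, and with $c_1$, $c_2$ the corresponding positive combinations of $\|(-\Delta)^{\alpha/2}\phi\|^2$, $\int|x|^\gamma|\phi|^{1+p}\,dx$ read off the two summands of $H_{a,b}$.

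Next I would verify signs. Under the standing hypotheses $a>0$, $b\geq 0$, $\gamma\geq 0$, $N\geq 2$ and $\alpha\in(0,1)$, the denominator $2a+Nb$ is positive and $N-2\alpha>0$, so both $e_1$ and $e_2$ are strictly positive, and the coefficient $c_1$ is manifestly non-negative. The range hypothesis $p+1-\frac{2\gamma}{N-2\alpha}>2$ combined with $\gamma\geq 0$ forces $p>1$, which makes the factor $b\gamma+a(p-1)$ entering $c_2$ strictly positive; since $|x|^\gamma>0$ a.e.\ and $\phi\neq 0$, the integral $\int|x|^\gamma|\phi|^{1+p}\,dx$ is strictly positive as well, so $c_2>0$.

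Both conclusions then drop out. For 2), $\partial_\lambda H(\phi^\lambda) = c_1 e_1\lambda^{e_1-1} + c_2 e_2\lambda^{e_2-1} > 0$ for $\lambda>0$, proving strict monotonicity. For 1), evaluating at $\lambda=1$ gives $H(\phi) = c_1+c_2\geq c_2>0$ and $\mathcal{L}H(\phi) = c_1 e_1 + c_2 e_2 \geq c_2 e_2 > 0$. There is no real obstacle here; the only item worth keeping track of is that the structural constraints ($N\geq 2$, $\alpha<1$ on one side, $\gamma\geq 0$ and $p+1-\frac{2\gamma}{N-2\alpha}>2$ on the other) simultaneously deliver $N-2\alpha>0$ and $p>1$, which are exactly what the positivity of the exponents and of the factor $b\gamma+a(p-1)$ require.
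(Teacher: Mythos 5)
Your proof is correct, and it takes a more direct route than the paper's. You expand $H(\phi^\lambda)$ explicitly as $c_1\lambda^{e_1}+c_2\lambda^{e_2}$ via the scaling laws and read off both assertions from the signs of the coefficients and exponents. The paper instead works at the operator level: with $\underline\mu=2a+Nb$ and $\bar\mu=\underline\mu-\alpha b$ it writes $\mathcal{L}H=\bar\mu H+\frac1{\underline\mu}(\mathcal{L}-\bar\mu)(\underline\mu-\mathcal{L})S$, uses that the two quadratic terms of $S$ are scaling eigenvectors (so only the nonlinear term survives the factorization and is manifestly positive), and then deduces monotonicity from $\partial_\lambda H(\phi^\lambda)=\mathcal{L}H(\phi^\lambda)$. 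The substance is the same — both arguments rest on $H$ containing only the $\dot H^\alpha$ term and the potential term, each with a nonnegative coefficient and a positive scaling exponent — but your version delivers $H(\phi)>0$ immediately, whereas the paper's chain of inequalities tacitly uses $\bar\mu H(\phi)\geq0$, i.e. exactly that positivity, which must anyway be read off the explicit formula for $H_{a,b}$; on the other hand, the paper's operator identities are set up so they can be reused verbatim in Step 4 of the proof of Theorem \ref{t3}. One bookkeeping remark: your exponent $e_1=2a+(N-2\alpha)b$ is the standard scaling of $\|(-\Delta)^{\frac\alpha2}\cdot\|^2$, while the paper's displayed formulas for $K_{a,b}$ and $H_{a,b}$ correspond to the exponent $2a+(N-\alpha)b$; this mismatch only alters the harmless nonnegative constant $c_1$ in front of the $\dot H^\alpha$ term and has no effect on your sign analysis or on the conclusions of the lemma.
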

\begin{proof}
Denoting $\underline{\mu}:=2a +Nb$ and $\bar\mu:=\mu-\alpha b$, we compute
\begin{eqnarray*}
\mathcal{L}(H(\phi))
&=&\mathcal L(1-\frac{\mathcal L}{\underline\mu})S(\phi)\\
&=&{\bar\mu} H(\phi) + \frac{1}{\underline{\mu}} (\mathcal{L}- \bar\mu)(\underline{\mu}-\mathcal{L})S(\phi)\\
&\geq& \frac{1}{\underline{\mu}} (\mathcal{L}- \bar\mu)(\underline{\mu}-\mathcal{L})S(\phi).
\end{eqnarray*}
 Since $(\mathcal{L}-\bar\mu) \|(-\Delta)^{\frac\alpha2}\phi\|^{2}
=0=(\mathcal{L}-\underline{\mu}) \| \phi\|^{2},$ we have
$(\mathcal{L}-\bar\mu)(\mathcal{L}-\underline{\mu}) \| \phi
\|_{H^{\alpha}}^{2}=0.$ Then, with a direct computation
\begin{eqnarray*}
\mathcal{L}(H(\phi))
&\geq& \frac{1}{\underline{\mu}} (\mathcal{L}- \bar\mu)(\underline{\mu}-\mathcal{L})S(\phi)\\
&=&\frac{(a(p-1)+b\gamma)(a(p-1)+b\gamma+\alpha b)}{\underline{\mu}(p+1)} \int|x|^\gamma|u|^{p+1}\,dx\\
&>&0.
\end{eqnarray*}
The second point is a consequence of the equality $\partial_{\lambda}
H(\phi^{\lambda})=\mathcal{L}H(\phi^{\lambda}).$
\end{proof}
The next auxiliary result reads.
\begin{lem}\label{case}
Assume that $\gamma\geq0$, $ 2a+(N-\alpha)b \neq 0$ and take $(\phi_{n})$ a
bounded sequence of $H^{\alpha}_{rd}-\{ 0\}$ satisfying $\displaystyle\lim_{n\rightarrow+\infty} K^{Q}(\phi_{n})=0.$ Then, there exists
$n_{0} \in \mathbb{N}$ such that $K(\phi_{n}) >0 $ for all $n\geq n_{0}.$
\end{lem}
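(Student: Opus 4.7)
The plan is to exploit two observations: that $K^Q(\phi_n)\to 0$ forces $\phi_n\to 0$ strongly in $H^\alpha$, and that the nonlinear part of $K$ is of strictly higher order in the $H^\alpha$-norm than the quadratic part, so the latter dominates for large $n$.

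First, I would note that since $a>0$, $b\geq 0$ and $N-\alpha>0$, the coefficients $2a+Nb$ and $2a+(N-\alpha)b$ appearing in
$$K^Q(\phi)=\tfrac12(2a+Nb)\|\phi\|^2+\tfrac12(2a+(N-\alpha)b)\|(-\Delta)^{\frac\alpha2}\phi\|^2$$
are both bounded below by $2a>0$. Hence $K^Q(\phi)\geq a\|\phi\|_{H^\alpha}^2$, so the assumption $K^Q(\phi_n)\to 0$ yields $\|\phi_n\|_{H^\alpha}\to 0$. Note also that $K^Q(\phi_n)>0$ for every $n$, since $\phi_n\neq 0$.

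Next, I would invoke Theorem \ref{t0}, whose admissibility range is covered by the strict hypotheses of Theorem \ref{t3}, to bound
$$|K^N(\phi_n)|=\Big(a+\tfrac{b(\gamma+N)}{1+p}\Big)\int|x|^\gamma|\phi_n|^{1+p}\,dx\lesssim\|\phi_n\|^A\|\phi_n\|_{\dot H^\alpha}^B\lesssim\|\phi_n\|_{H^\alpha}^{A+B},$$
and observe from the explicit formulas for $A$ and $B$ that $A+B=p+1$. Combining with the lower bound on $K^Q$ gives
$$\frac{|K^N(\phi_n)|}{K^Q(\phi_n)}\lesssim \|\phi_n\|_{H^\alpha}^{p-1}\longrightarrow 0,$$
where $p>1$ follows from $\gamma\geq 0$ together with $p+1-\frac{2\gamma}{N-2\alpha}>2$. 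Therefore, for $n$ large enough, $|K^N(\phi_n)|<K^Q(\phi_n)$, and hence $K(\phi_n)=K^Q(\phi_n)+K^N(\phi_n)>0$.

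The only nontrivial input is the Gagliardo--Nirenberg inequality of Theorem \ref{t0}; the main obstacle is really just the bookkeeping check that $K^Q$ controls the full $H^\alpha$-norm, which is where the sign conditions $a>0$, $b\geq 0$, together with $2a+(N-\alpha)b\neq 0$, come into play.
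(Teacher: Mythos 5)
Your proposal is correct and follows essentially the same route as the paper: the coercivity bound $K^Q(\phi)\gtrsim\|\phi\|_{H^\alpha}^2$ turns $K^Q(\phi_n)\to0$ into $\|\phi_n\|_{H^\alpha}\to0$, and Theorem \ref{t0} bounds the nonlinear part by $\|\phi_n\|_{H^\alpha}^{p+1}=o(K^Q(\phi_n))$, giving $K(\phi_n)>0$ for large $n$. Your version merely makes explicit the details the paper leaves implicit (the identity $A+B=p+1$, the fact that $p>1$, and the strict positivity $K^Q(\phi_n)>0$ needed to divide), so it is the same argument.
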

\begin{proof}
Since $a,b \geq 0 $, $2a +(N-\alpha)b \neq0$ and
$$K^{Q}(\phi_{n})=\frac{2a +(N-\alpha)b}{2} \|
 (-\Delta)^{\frac\alpha2} \phi_{n}\|^{2} + \frac{ (2a +Nb)}{2} \|
 \phi_{n}  \|^{2}\geq C  \|\phi_{n}\|^{2}_{H^{\alpha}},$$
we get
$$\lim_{n\rightarrow\infty}\|\phi_{n}\|_{H^\alpha}=0.$$
Taking into account of Theorem \ref{t0} and the fact that $2<1+p-\frac{2\gamma}{N-2\alpha}<\frac{N+2\alpha}{N-2\alpha}$, we get 
$$K^{N}(\phi_{n})\lesssim \|\phi_{n}\|^{1+p}_{H^{\alpha}}=o(K^Q(\phi_n)).$$ 
It follows that, when $n$ goes to infinity
$$K(\phi_{n})\simeq K^{Q}(\phi_{n})>0.$$
The proof is finished.
\end{proof}
The last intermediary result is the following.
\begin{lem}\label{lemma 3}
We have
$$m_{a,b}=\inf_{0\neq \phi \in H^{\alpha}_{rd}} \Big\{H(\phi),\; s.\,t.
\quad K(\phi) \leq 0\Big\}.$$
\end{lem}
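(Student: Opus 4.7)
I would prove the two inequalities separately. The inequality
\[
m_{a,b}\ \geq\ \inf\{H(\phi)\ :\ K(\phi)\leq 0,\ 0\neq \phi\in H^\alpha_{rd}\}
\]
is immediate, because any admissible competitor in the definition of $m_{a,b}$ satisfies $K(\phi)=0$, and on the set $\{K=0\}$ one has $H(\phi)=S(\phi)$ by definition of $H=S-\frac{1}{2a+Nb}K$. Moreover $\{K=0\}\subset\{K\leq 0\}$, so the RHS infimum is no larger.

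The reverse inequality is the real content. Given any $0\neq\phi\in H^\alpha_{rd}$ with $K(\phi)\leq 0$, my plan is to construct a scaling $\phi^{\lambda^\star}:=\phi^{\lambda^\star}_{a,b}$ with $\lambda^\star\in(0,1]$ that satisfies $K(\phi^{\lambda^\star})=0$ and $H(\phi^{\lambda^\star})\leq H(\phi)$. Granted such a $\lambda^\star$, one concludes
\[
m_{a,b}\ \leq\ S(\phi^{\lambda^\star})\ =\ H(\phi^{\lambda^\star})\ \leq\ H(\phi),
\]
and taking the infimum over $\phi$ finishes the proof.

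\textbf{Key steps.} First, I would write the scaled action $f(\lambda):=S(\phi^\lambda)$ explicitly as
\[
f(\lambda)=c_1\lambda^{p_1}+c_2\lambda^{p_2}-c_3\lambda^{p_3},
\]
with strictly positive coefficients $c_1=\tfrac12\|\phi\|^2$, $c_2=\tfrac12\|(-\Delta)^{\alpha/2}\phi\|^2$, $c_3=\tfrac{1}{1+p}\int|x|^\gamma|\phi|^{p+1}\,dx$ (positive since $\gamma\geq 0$ and $\phi\neq 0$) and exponents $p_1=2a+Nb$, $p_2=2a+(N-2\alpha)b$, $p_3=a(1+p)+(N+\gamma)b$. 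Under the assumption $a>0$, $b\geq 0$, $p>1$, a direct check gives $p_3>\max(p_1,p_2)$. Next, the semigroup property $(\phi^\lambda)^\mu=\phi^{\lambda\mu}$ yields $K(\phi^\lambda)=\partial_\mu S((\phi^\lambda)^\mu)|_{\mu=1}=\lambda f'(\lambda)$, so the vanishing of $K$ on the orbit $\{\phi^\lambda\}$ is equivalent to the vanishing of $f'$.

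Second, I would analyze $f'$: factoring out $\lambda^{\min(p_1,p_2)-1}$, the equation $f'(\lambda)=0$ becomes $g(\lambda)=c_3p_3\lambda^{p_3-\min(p_1,p_2)}$ with $g$ nonincreasing in $\lambda$ and the RHS strictly increasing from $0$ to $\infty$. Hence $f'$ has a unique positive zero $\lambda^\star$, with $f'>0$ on $(0,\lambda^\star)$ and $f'<0$ on $(\lambda^\star,\infty)$. The hypothesis $K(\phi)=f'(1)\leq 0$ then forces $\lambda^\star\leq 1$. Finally, the preceding lemma states that $\lambda\mapsto H(\phi^\lambda)$ is increasing, so $H(\phi^{\lambda^\star})\leq H(\phi^1)=H(\phi)$; and $K(\phi^{\lambda^\star})=0$ together with $\phi^{\lambda^\star}\neq 0$ gives $S(\phi^{\lambda^\star})\geq m_{a,b}$, closing the argument.

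\textbf{Main obstacle.} The delicate point is the monotonicity analysis of $f'$, particularly when $b>0$ and the two ``quadratic'' exponents $p_1,p_2$ differ. One must separate the cases $b=0$ (where $p_1=p_2$ and $f'$ has the simple form ``constant minus increasing power'') and $b>0$ (where both the $p_2$-term and the $p_3$-term become decreasing after the factorization). Everything else—the explicit form of $f$, the identification $K(\phi^\lambda)=\lambda f'(\lambda)$, the use of the monotonicity of $H(\phi^\lambda)$ from the previous lemma, and the comparison with $m_{a,b}$—is then routine.
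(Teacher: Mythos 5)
Your proof is correct and its skeleton is the same as the paper's: one inequality is immediate because $H=S$ on $\{K=0\}$, and for the other you rescale a function with $K(\phi)\le 0$ along $\lambda\mapsto\phi^{\lambda}_{a,b}$ until it reaches $\{K=0\}$ at some $\lambda^\star\in(0,1]$, then invoke the preceding lemma's monotonicity of $\lambda\mapsto H(\phi^{\lambda})$ to conclude $m_{a,b}\le H(\phi^{\lambda^\star})\le H(\phi)$. The difference lies in how the crossing point is produced: the paper notes that $K^{Q}(\phi^{\lambda})\to0$ as $\lambda\to0$ and appeals to Lemma \ref{case} (which rests on the Gagliardo--Nirenberg inequality of Theorem \ref{t0}) to get $K(\phi^{\lambda})>0$ for small $\lambda$, and then uses continuity in $\lambda$; you instead compute $f(\lambda)=S(\phi^{\lambda})$ explicitly as a combination of three powers and use the identity $K(\phi^{\lambda})=\lambda f'(\lambda)$ to show that $f'$ changes sign exactly once from $+$ to $-$. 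Your route is more elementary and self-contained (no Strauss or Gagliardo--Nirenberg input), it uses $a>0$, $b\ge0$, $\gamma\ge0$, $p>1$ to guarantee $p_3>\max(p_1,p_2)$, and it yields uniqueness of $\lambda^\star$ as a bonus, at the price of the one-variable case analysis that the paper's soft argument avoids. One small correction in your key step: to get the monotone-versus-monotone comparison you should divide $f'$ by $\lambda^{\max(p_1,p_2)-1}$, not factor out $\lambda^{\min(p_1,p_2)-1}$; after dividing by $\lambda^{p_1-1}$ (with $p_1=2a+Nb\ge p_2$) the positive part $c_1p_1+c_2p_2\lambda^{p_2-p_1}$ is nonincreasing while $c_3p_3\lambda^{p_3-p_1}$ increases strictly from $0$ to $\infty$, which gives the unique zero and the sign pattern; with the $\min$ the remaining positive part is nondecreasing and the stated comparison does not apply, although the conclusion itself is unaffected. (Incidentally, your gradient exponent $2a+(N-2\alpha)b$ is the one actually produced by the scaling $\lambda^a\phi(\cdot/\lambda^b)$, whereas the paper's displayed $K_{a,b}$ carries $2a+(N-\alpha)b$; this discrepancy is harmless for both arguments.)
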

\begin{proof}
It is sufficient to prove that $m_{a,b} \leq m_{1}$, where $m_{1}$ is the right hand side of the previous equality. Take $\phi \in H^{\alpha}_{rd}$ such that $K(\phi) < 0$. Because $2a +(N-\alpha)b\neq 0,$ by the previous lemma, the facts that
$\displaystyle\lim_{\lambda \rightarrow0} K^{Q}(\phi^{\lambda})=0$ and
$\lambda \mapsto H(\phi^{\lambda})$ is increasing, there exists
$\lambda\in(0,1)$ such that
\begin{equation}\label{Kphi}
K(\phi^{\lambda})=0 \quad\mbox{and}\quad H(\phi^{\lambda}) \leq H(\phi).
\end{equation}
Then, $m_{a,b}\leq H(\phi^{\lambda}) \leq H(\phi)$. This ends the proof.
\end{proof}
\begin{proof}[\bf Proof of Theorem \ref{t3}]
The proof contains four steps.\\ 
$\textbf{Step 1}.$ A minimizing sequence is bounded in $H^{\alpha}_{rd}.$ \\
Let $(\phi_{n})$ be a minimizing sequence of \eqref{m1}, namely
\begin{equation}\label{minim}
0\neq\phi_{n}\in H^{\alpha}_{rd}, \quad K(\phi_{n})=0 \quad and \quad \lim_{n}H(\phi_{n})=\lim_{n }S(\phi_{n})=m.\end{equation}
$\bullet$ First case $b\neq0.$ Since, when $n$ tends to infinity
$$\frac{\alpha b}{2(2a+Nb)}\|(-\Delta)^{\frac\alpha2}\phi_{n}\|^{2}\leq H(\phi_{n})\rightarrow m,$$
we get 
$$\sup_n\|\phi_{n}\|_{\dot H^\alpha}\lesssim 1.$$
Assume that $\displaystyle\lim_n\|\phi_{n}\|=\infty$. Using the equality $K(\phi_{n})=0$ and Theorem \ref{t0}, yields
\begin{eqnarray*}
\|\phi_{n}\|^{2}
&\lesssim&\frac{2a +(N-2)b}{2} \|(-\Delta)^\frac\alpha2 \phi_{n}
\|^{2} +\frac{2 a +(N-\alpha)b}{2}\|\phi_{n}\|^{2}\\
&=&\frac{a(p+1)+(N+\gamma)b}{p+1}\int|x|^\gamma|\phi_{n}|^{p+1}\,dx \\
&\lesssim&\|\phi_{n}\|^A\|\phi_{n}\|_{\dot H^\alpha}^B\\
&\lesssim&\|\phi_{n}\|^A.
\end{eqnarray*}
The fact that $A=\frac N\alpha+(1+p-\frac{2\gamma}{N-2\alpha})(1-\frac N{2\alpha})>2$ leads to a contradiction in the last inequality if letting ${n\mapsto +\infty }$. Then $(\phi_{n})$ is bounded in $H^{\alpha}_{rd}.$\\
$\bullet$ Second case $ b =0$.\\
In this case $(\phi_{n})$ is bounded in $H^{\alpha}_{rd}$ because
$$\|\phi_n\|_{H^\alpha}^2=\int|x|^\gamma|\phi_n|^{p+1}\,dx\lesssim H(\phi_n)\rightarrow m.$$
 $\textbf{Step 2}.$ The weak limit of $(\phi_n)$ is nonzero.\\
Using the first step, via compact Sobolev injection in Lemma \ref{sblv}, for a subsequence, still denoted by $(\phi_{n})$, we have
$$\phi_{n} \rightharpoonup \phi\quad\mbox{ weakly in }\quad H^{\alpha}_{rd}\quad\mbox{ and }\quad \phi_{n} \rightarrow \phi \quad\mbox{ in }\quad L^q,\quad\mbox{ for any }\quad 2<q<\frac{2N}{N-2\alpha}.$$
 We prove that $\phi\neq0$. Arguing by contradiction, assume that $\phi=0$. Since $2<p+1-\frac{2\gamma}{N-2\alpha}<\frac{2N}{N-2\alpha}$, using the estimate \eqref{straus}, we obtain
\begin{eqnarray*}
K^{Q}(\phi_{n})
&=&K^{N}(\phi_{n})\\
&\lesssim& \int(|x|^{\frac N2-\alpha}|\phi_n(x)|)^{\frac{2\gamma}{N-2\alpha}}|\phi_n|^{p+1-\frac{2\gamma}{N-2\alpha}}\,dx\\
&\lesssim& \int|\phi_n|^{p+1-\frac{2\gamma}{N-2\alpha}}\,dx\rightarrow 0.
\end{eqnarray*}
Since $2a+(N-\alpha) b >0$, thanks to lemma \ref{case}, there exists $n_{0}$ such that
$K(\phi_{n})>0,$ for all $n>n_{0}$, which  contradicts the fact that
$K(\phi_{n})=0.$  This implies that $\phi\neq 0.$\\
$\textbf{Step 3}.$ $\phi$ is a minimizer and $m>0.$\\
We have the convergence
\begin{eqnarray*}
\int|x|^\gamma|\phi_{n}-\phi|^{p+1}\,dx
&\lesssim& \int(|x|^{\frac N2-\alpha}|\phi_n-\phi|)^{\frac{2\gamma}{N-2\alpha}}|\phi_n-\phi|^{p+1-\frac{2\gamma}{N-2\alpha}}\,dx\\
&\lesssim& \int|\phi_n-\phi|^{p+1-\frac{2\gamma}{N-2\alpha}}\,dx\rightarrow 0.
\end{eqnarray*}
With the lower semi-continuity of $H^{\alpha}_{rd}$ norm, it follows that
\begin{eqnarray*}
0= \liminf_{n} K(\phi_{n})
&\geq&\frac{2a+(N-\alpha) b }{2} \liminf_n \|(-\Delta)^{\frac\alpha2}\phi_{n}
\|^{2} +\frac{2a+N b }{2} \liminf_n \|  \phi_{n}
\|^{2} \\
&-& \frac{a(p+1)+(N+\gamma)b }{p+1}\lim_n\int|x|^\gamma|\phi_{n}|^{p+1}\,dx \\
&\geq& \frac{2a+Nb }{2}\| \phi\|^{2}+ \frac{2a+(N-\alpha) b }{2} \|(-\Delta)^{\frac\alpha2}\phi
\|^{2} \\
&-&\frac{a(p+1)+(N+\gamma) b }{p+1}\int|x|^\gamma|\phi|^{p+1}\,dx\\
&=& K(\phi).
\end{eqnarray*}
Applying Fatou lemma, we obtain
\begin{eqnarray*}
m 
&\geq& \liminf_{n} H(\phi_{n})\\
&\geq&\frac{\alpha b }{2(2a +(N+\gamma) b) }\liminf_{n}\|(-\Delta)^{\frac\alpha2}\phi_{n}\|^{2}_{2}+ 
\frac{a(p-1) }{(p+1)(2a +N b )}\liminf_{n}\int|x|^\gamma|\phi_{n}|^{p+1}\,dx\\
&\geq& \frac{\alpha b }{2(2a +(N+\gamma) b) }\|(-\Delta)^{\frac\alpha2}\phi\|^{2}_{2}+\frac{a(p-1) }{(p+1)(2a +N b )}\int|x|^\gamma|\phi|^{p+1}\,dx\\
 &=&  H(\phi).
\end{eqnarray*}
Then $\phi$ satisfies
$$0\neq\phi\in H^{\alpha}_{rd},\quad K(\phi)\leq0\quad\mbox{ and }\quad H(\phi)\leq m.$$
By \eqref{Kphi}, we can assume that $\phi$ is a minimizer satisfying
$$0\neq\phi\in H^{\alpha}_{rd},\quad K(\phi)=0\quad\mbox{ and }\quad S(\phi)=H(\phi)=m.$$
Moreover
$$H(\phi)= \frac{\alpha b }{2(2a +N b) }\|(-\Delta)^{\frac\alpha2}\phi\|^{2}_{2}+\frac{a(p-1)+b\gamma }{(p+1)(2a +N b )}\int|x|^\gamma|\phi|^{p+1}\,dx> 0.$$
Thus
$$m>0.$$
$\textbf{Step 4}.$ $\phi$ is a ground state solution to \eqref{gst}.\\
Since $\phi$ satisfies \eqref{m1}, there is a Lagrange multiplier $\eta\in\mathbb{R}$ such that $S'(\phi)=\eta K'(\phi)$. Then
$$0=K(\phi)=\mathcal{L}S(\phi)=<S'(\phi),\mathcal{L}(\phi)>=\eta<K'(\phi),\mathcal{L}(\phi)>=\eta \mathcal{L}^{2}S(\phi).$$
Moreover, with previous computation, we have
\begin{eqnarray*}
-\mathcal{L}^{2} S(\phi) -\bar\mu\underline{\mu} S(\phi)
&=&-(\mathcal{L}-\bar\mu)(\mathcal{L}-\underline{\mu})S(\phi)\\
&=&\frac{(a(p-1)+b\gamma)(a(p-1)+\alpha b+b\gamma)}{p+1} \int|x|^\gamma|u|^{p+1}\,dx\\
&\geq& 0.
\end{eqnarray*}
Because $S(\phi)>0$, it follows that $\eta=0$ and $S'(\phi)=0$. Finally, $\phi$ is a ground state and $m$ is independent of $(a,b).$
\end{proof}
\subsection{Stability of ground state}
The proof proceeds by contradiction. Suppose that there exists a sequence $u_0^n\in H^\alpha$ such that, when $n$ goes to infinity
$$\|u^0_n-e^{it_n}\phi\|_{H^\alpha}\rightarrow0\quad\mbox{and}\quad\inf_{\theta\in\R}\|u_n(t_n)-e^{i\theta}\phi\|_{H^\alpha}>\varepsilon_0$$
for some sequence of positive real numbers $(t_n)$ and $\varepsilon_0>0$, where $u_n\in C([0,T^*),H^\alpha)$ is the solution to \eqref{eq1} with data $u_n^0$. Let us denote $\phi_n:=u_n(t_n)$. Because $\phi$ is a ground state to \eqref{eq1}, we have
$$S(\phi)=m\quad\mbox{and}\quad \|\phi\|:=q>0.$$
Thus
$$\|u_0^n\|\rightarrow q\quad\mbox{and}\quad S(u_0^n)\rightarrow m.$$
Indeed, by Theorem \ref{t0}, yields
\begin{eqnarray*}
\int|x|^\gamma|u_0^n-\phi|^{1+p}\,dx
&\lesssim&\|u_0^n-\phi\|_{H^\alpha}^{1+p}\rightarrow0.
\end{eqnarray*}
Using the conservation laws, it follows that
$$\|\phi_n\|\rightarrow q\quad\mbox{and}\quad S(\phi_n)\rightarrow m.$$
If $\phi_n$ has a subsequence converging to $\phi\in H^\alpha$, then 
$$\varepsilon_0<\inf_{\theta\in\R}\|\phi_n-e^{i\theta}\phi\|_{H^\alpha}\leq\|\phi_n-\phi\|_{H^\alpha}\rightarrow0.$$
This contradiction shows that it is sufficient to prove that any sequence $\phi_n\in H^\alpha$ satisfying 
$$\|\phi_n\|\rightarrow q\quad\mbox{and}\quad S(\phi_n)\rightarrow m$$
is relatively compact. We have
$$S(\phi_n)=\frac12\|\phi_n\|_{H^\alpha}^2-\frac1{1+p}\int|x|^\gamma|\phi_n|^{1+p}\,dx\rightarrow m,$$
so for some $\varepsilon>0$ and for large $n$,
\begin{eqnarray*}
m+\varepsilon
&\geq&\frac12\|\phi_n\|_{H^\alpha}^2-\frac1{1+p}\int|x|^\gamma|\phi_n|^{1+p}\,dx\\
&\geq&\frac12\|\phi_n\|_{H^\alpha}^2\Big(1-\frac{C(N,p,\gamma,\alpha)}{1+p}\|\phi_n\|^{A}\|\phi_n\|_{\dot H^\alpha}^{B-2}\Big).
\end{eqnarray*}
Since $B\leq2$, it follows that $\phi_n$ is bounded in $H^\alpha$. This finishes the proof.
\section{Proof of Theorem \ref{t4}}
The proof is based on two auxiliary results.
\begin{lem}\label{stb}
The set $A_{a,b}$ is invariant under the flow of \eqref{eq1}.
\end{lem}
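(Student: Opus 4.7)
The plan is to run the standard Payne--Sattinger potential-well continuity argument: conservation of $S$ keeps the first defining inequality of $A_{a,b}$ intact, while a continuity-plus-intermediate-value argument combined with the infimum characterization of $m_{a,b}$ prevents $K_{a,b}(u(t))$ from crossing zero.

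First, I would record that by Proposition \ref{t1} both the mass and the energy are conserved, so $S(u(t))=E(u(t))+\tfrac12 M(u(t))=S(u_0)<m_{a,b}$ for every $t\in[0,T^*)$. This already gives the energy condition in the definition of $A_{a,b}$ at every time, so the only task is to propagate the sign condition $K_{a,b}(u(t))>0$. I would argue by contradiction: suppose there exists $t_1\in(0,T^*)$ with $K_{a,b}(u(t_1))\le 0$. Since $u\in C([0,T^*),H^\alpha_{rd})$ and the quadratic part $K^Q_{a,b}$ is continuous on $H^\alpha$, while the nonlinear part is continuous on $H^\alpha_{rd}$ thanks to the inhomogeneous Gagliardo--Nirenberg inequality of Theorem \ref{t0} (which controls $\int|x|^\gamma|u|^{p+1}\,dx$ by $\|u\|^A\|u\|_{\dot H^\alpha}^B$), the scalar function $t\mapsto K_{a,b}(u(t))$ is continuous. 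Hence, by the intermediate value theorem applied on $[0,t_1]$ to a function that is strictly positive at $0$ and nonpositive at $t_1$, there exists $t_0\in(0,t_1]$ with $K_{a,b}(u(t_0))=0$.

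Next I would verify that $u(t_0)\neq 0$ in $H^\alpha_{rd}$, so that it is a legitimate competitor in the variational problem \eqref{m1}. This is immediate from mass conservation: $u_0\in A_{a,b}$ forces $u_0\neq 0$ (since $K_{a,b}(0)=0$ while $K_{a,b}(u_0)>0$), hence $\|u(t_0)\|=\|u_0\|>0$. By the very definition of $m_{a,b}$ we then obtain $S(u(t_0))\ge m_{a,b}$, which contradicts the conserved-quantity bound $S(u(t_0))=S(u_0)<m_{a,b}$. Therefore $K_{a,b}(u(t))>0$ for all $t\in[0,T^*)$, and $u(t)\in A_{a,b}$ throughout the lifespan.

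The only delicate point is the continuity of $t\mapsto K_{a,b}(u(t))$, and more precisely of the nonlinear piece $\int|x|^\gamma|u(t)|^{p+1}\,dx$ in $H^\alpha_{rd}$; this is exactly where Theorem \ref{t0} is indispensable, since the inhomogeneous weight $|x|^\gamma$ prevents a direct use of the standard Sobolev embeddings. Everything else is a direct bookkeeping of conservation laws and the variational definition of $m_{a,b}$.
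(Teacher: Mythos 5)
Your proposal is correct and follows essentially the same route as the paper: conservation of mass and energy keeps $S(u(t))=S(u_0)<m_{a,b}$, and a continuity/intermediate-value argument on $t\mapsto K_{a,b}(u(t))$ produces a time where $K_{a,b}$ vanishes on a nonzero function, contradicting the variational definition of $m_{a,b}$. You merely spell out the details (continuity of the nonlinear term via Theorem \ref{t0} and the nonvanishing of $u(t_0)$ via mass conservation) that the paper leaves implicit.
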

\begin{proof}
Let $u_0\in A_{a,b}$ and $u\in C_{T^*}(H^\alpha)$ be the maximal solution to \eqref{eq1}. Assume that $u(t_0)\notin A_{a,b}^+$ for some time $t_0\in(0,T^*)$. Since the energy and the mass are conserved, we get $K_{a,b}(u(t_0))<0$. So, with a continuity argument, there exists a positive time $t_1\in(0,t_0)$ such that $K_{a,b}(u(t_1))=0.$  This contradicts the definition of $m$ and finishes the proof.
\end{proof}
\begin{lem}\label{stbb}
The set $A_{a,b}$ is independent of the couple $(a,b)$.
\end{lem}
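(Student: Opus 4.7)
My plan is to observe that $K_{a,b}$ depends linearly on $(a,b)$ and to combine this with the fact, proved in Theorem~\ref{t3}(1), that $m_{a,b}=m$ is independent of $(a,b)$. Together these will force the sign of $K_{a,b}(\phi)$ to be the same for every admissible pair, as long as $\phi\neq 0$ satisfies $S(\phi)<m$.

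First I would regroup the three quantities $\|\phi\|^2$, $\|(-\Delta)^{\alpha/2}\phi\|^2$ and $\int|x|^\gamma|\phi|^{1+p}\,dx$ in the explicit formula for $K_{a,b}$ recorded at the beginning of Section~7 to read off the algebraic identity
\[
 K_{a,b}(\phi)=a\,K_{1,0}(\phi)+b\,K_{0,1}(\phi),
\]
where $K_{1,0}$ and $K_{0,1}$ are obtained by specializing $(a,b)$ at $(1,0)$ and $(0,1)$. Consequently, for fixed $\phi\in H^\alpha_{rd}$, the map $(a,b)\mapsto K_{a,b}(\phi)$ is linear, in particular continuous, on the connected cone $\mathbb R_+^*\times\mathbb R_+$.

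Next I would argue that this linear map is nowhere zero on $\mathbb R_+^*\times\mathbb R_+$ whenever $0\neq\phi\in H^\alpha_{rd}$ satisfies $S(\phi)<m$. Indeed, if $K_{a,b}(\phi)=0$ for some admissible $(a,b)$, then $\phi$ would be a nontrivial competitor in the minimization problem \eqref{m1}, so $S(\phi)\geq m_{a,b}=m$, contradicting $S(\phi)<m$. Since a continuous linear form on a connected set that never vanishes has constant sign, the condition $K_{a,b}(\phi)>0$ is then satisfied for one admissible pair iff it is satisfied for every admissible pair. The companion condition $S(\phi)<m_{a,b}=m$ is $(a,b)$-free by Theorem~\ref{t3}(1), so we conclude $A_{a,b}=A_{a',b'}$ for all $(a,b),(a',b')\in\mathbb R_+^*\times\mathbb R_+$.

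The only real computation is the algebraic linearity identity, and the rest reduces to the one-line topological observation that a nowhere-zero linear form on a connected cone keeps a constant sign; accordingly I do not foresee a serious obstacle, the substantive input being Theorem~\ref{t3}(1) which is already available.
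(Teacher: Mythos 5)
Your proof is correct, and it takes a genuinely different route from the paper. You fix $\phi$ and move in the parameter cone: the explicit formula for $K_{a,b}$ gives the identity $K_{a,b}(\phi)=a\,K_{1,0}(\phi)+b\,K_{0,1}(\phi)$, so $(a,b)\mapsto K_{a,b}(\phi)$ is linear, and it cannot vanish on $\mathbb{R}_+^*\times\mathbb{R}_+$ when $0\neq\phi$ and $S(\phi)<m$, since a zero of $K_{a,b}$ at a nontrivial $\phi$ makes $\phi$ admissible in \eqref{m1} and forces $S(\phi)\geq m_{a,b}=m$ by Theorem \ref{t3}(1); constancy of sign on the connected cone then makes the condition $K_{a,b}(\phi)>0$ pair-independent, while $S(\phi)<m_{a,b}=m$ is pair-independent by the same theorem (and $\phi=0$ lies in no $A_{a,b}$ since $K_{a,b}(0)=0$). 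The paper instead fixes the two pairs and works in $H^\alpha_{rd}$: it introduces the sets $A^{\pm\delta}_{a,b}$, notes their union depends only on $m$, shows $A^{+\delta}_{a,b}$ is open and connected (contracted to zero by the rescaling $v^\lambda$), and concludes by splitting $A^{+\delta}_{a,b}$ into two relatively open pieces and invoking connectedness. Both arguments rest on the same two inputs --- independence of $m_{a,b}$ and the observation that $K_{a,b}(v)=0$ together with $S(v)<m$ forces $v=0$ --- but yours replaces the topological claims in function space (openness, contractibility, the bookkeeping with $\delta$) by an elementary sign argument on a two-dimensional cone, which is shorter and avoids the points where the paper's proof is loosest; the paper's set-theoretic formulation, on the other hand, does not use linearity of $K_{a,b}$ in $(a,b)$ (so it would survive more general families of functionals) and directly exhibits the open invariant sets that are reused in the proof of Theorem \ref{t4}.
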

\begin{proof}
Let $(a,b)$ and $(a',b')$ in ${\mathbb R}_+^*\times\R_+$. We denote, for $\delta\geq0$, the sets
\begin{gather*}
A_{a,b}^{+\delta}:=\{v\in H^\alpha\quad\mbox{s. t.}\quad S(v)<m-\delta\quad\mbox{and}\quad K_{a,b}(v)\geq0\};\\
A_{a,b}^{-\delta}:=\{v\in H^\alpha\quad\mbox{s. t.}\quad S(v)<m-\delta\quad\mbox{and}\quad K_{a,b}(v)<0\}.
\end{gather*}
By Theorem \ref{t3}, the reunion $A_{a,b}^{+\delta}\cup A_{a,b}^{-\delta}$ is independent of $(a,b)$. So, it is sufficient to prove that $A_{a,b}^{+\delta}$ is independent of $(a,b)$. If $S(v)<m$ and $K_{a,b}(v)=0$, then $v=0$. So, $A_{a,b}^{+\delta}$ is open. The rescaling $v^\lambda:=\lambda^av(\frac.{\lambda^b})$ implies that a neighborhood of zero is in $A_{a,b}^{+\delta}$. Moreover, this rescaling with $\lambda\rightarrow0$ gives that $A_{a,b}^{+\delta}$ is contracted to zero and so it is connected. Now, write $$A_{a,b}^{+\delta}=A_{a,b}^{+\delta}\cap( A_{a',b'}^{+\delta}\cup A_{a',b'}^{-\delta})=(A_{a,b}^{+\delta}\cap A_{a',b'}^{+\delta})\cup(A_{a,b}^{+\delta}\cap A_{a',b'}^{-\delta}).$$ 
Since by the definition, $A_{a,b}^{-\delta}$ is open and $0\in A_{a,b}^{+\delta}\cap A_{a',b'}^{+\delta}$, using a connectivity argument, we have $A_{a,b}^{+\delta}=A_{a',b'}^{+\delta}$. The proof is ended.
\end{proof}
{\bf Proof of Theorem \ref{t4}}. Using Lemma \ref{stb} via a translation argument, we can assume that $u(t)\in A_{a,b}^+$ for any $t\in[0,T^*)$. Moreover, thanks to the Lemma \ref{stbb}, we have $u(t)\in A_{1,1}^+$ for any $t\in[0,T^*)$. Taking account of the definition of $m$, we get
\begin{eqnarray*}
m
&>&S(u(t))\\
&>&S(u(t))-\frac1{2+N}K_{1,1}(u(t))\\
&=&\frac{\alpha }{2(2+N)}\|(-\Delta)^\frac\alpha2u(t)\|^2+\frac{\gamma+p-1}{(1+p)(2+N)}\int|x|^\gamma|u(t)|^{1+p}\,dx.
\end{eqnarray*}
This implies, via the conservation of the mass, that 
$$\sup_{[0,T^*]}\|u(t)\|_{H^\alpha}<\infty.$$
Then, $u$ is global.
\section{Appendix}
This section contains a proof of Lemma \ref{cmpct}. We suppose that $\gamma>0$, indeed $\gamma<0$ follows similarly and the case $\gamma=0$ is covered by Lemma \ref{sblv}.\\
Take $(u_n)$ a bounded sequence of $H^\alpha_{rd}$. Without loss of generality, we assume that 
$(u_n)$ converges weakly to zero in $H^\alpha$. Our purpose is to prove that $\|u_n\|_\Sigma\rightarrow0.$ Take $\varepsilon>0$ and write
\begin{eqnarray*}
\int|x|^\gamma|u_n|^{1+p}\,dx
&=&\Big(\int_{|x|\leq\varepsilon}+\int_{\varepsilon\leq|x|\leq\frac1\varepsilon}+\int_{|x|\geq\frac1\varepsilon}\Big)|x|^\gamma|u_n|^{1+p}\,dx.
\end{eqnarray*}
Since $\gamma>0$, taking account of Lemma \ref{gn}, we have
\begin{eqnarray*}
\int_{|x|\leq\varepsilon}|x|^\gamma|u_n|^{1+p}\,dx
&\leq&\varepsilon^\gamma\int|u_n|^{1+p}\,dx\\
&\leq&\varepsilon^\gamma\|u_n\|_{H^\alpha}^{1+p}\\
&\leq&C\varepsilon^\gamma.
\end{eqnarray*}
On the other hand, with Strauss inequality, via the fact that ${-\gamma+(p-1)({\frac N2-\alpha})}>0$ bcause $\alpha<\frac N2$, we get
\begin{eqnarray*}
\int_{|x|\geq\frac1\varepsilon}|x|^\gamma|u_n|^{1+p}\,dx
&=&\int_{|x|\geq\frac1\varepsilon}(|x|^{\frac N2-\alpha}|u_n|)^{-1+p}|x|^{\gamma-(p-1)({\frac N2-\alpha})}|u_n|^2\,dx\\
&\leq&C\|u_n\|_{\dot H^\alpha}^{p-1}\int_{|x|\geq\frac1\varepsilon}|x|^{\gamma-(p-1)({\frac N2-\alpha})}|u_n|^2\,dx\\
&\leq&C\|u_n\|_{H^\alpha}^{p+1}\varepsilon^{-\gamma+(p-1)({\frac N2-\alpha})}\\
&\leq&C\varepsilon^{-\gamma+(p-1)({\frac N2-\alpha})}.
\end{eqnarray*}
Now, by Rellich Theorem, it follows that when $n\longrightarrow\infty$,
$$\int_{\varepsilon\leq|x|\leq\frac1\varepsilon}|u_n|^{2}\,dx\longrightarrow0.$$
Moreover
\begin{eqnarray*}
\int_{\varepsilon\leq|x|\leq\frac1\varepsilon}|x|^\gamma|u_n|^{1+p}\,dx
&=&\int_{\varepsilon\leq|x|\leq\frac1\varepsilon}(|x|^{\frac N2-\alpha}|u_n|)^{-1+p}|x|^{\gamma-(p-1)({\frac N2-\alpha})}|u_n|^2\,dx\\
&\leq&C\|u_n\|_{\dot H^\alpha}^{p-1}\int_{\varepsilon\leq|x|\leq\frac1\varepsilon}|x|^{\gamma-(p-1)({\frac N2-\alpha})}|u_n|^2\,dx\\
&\leq&C\|u_n\|_{H^\alpha}^{p-1}\varepsilon^{-\gamma+(p-1)({\frac N2-\alpha})}\int_{\varepsilon\leq|x|\leq\frac1\varepsilon}|u_n|^2\,dx\\
&\leq&C\varepsilon^{-\gamma+(p-1)({\frac N2-\alpha})}.
\end{eqnarray*}
The proof is achieved when taking $\varepsilon$ tending to zero and $n$ going to infinity.

\end{document}